\font\fr=eufm10 scaled \magstep 1 
\newtheorem{theorem}{Theorem}
\newtheorem{prop}{Proposition}
\newtheorem{lem}{Lemma}
\newtheorem{definition}{Definition}
\newtheorem{remarkth}[definition]{Remark}
\def\qed{\ifvmode\Realemovelastskip\fi
{\unskip\nobreak\hfil\penalty50\hbox{}\nobreak\hfil \hbox{\vrule
height1.2ex width1.2ex}\parfillskip=0pt \finalhyphendemerits=0
\par\smallskip}}
\def\qedr{\ifvmode\Realemovelastskip\fi
{\unskip\nobreak\hfil\penalty50\hbox{}\nobreak\hfil \hbox{
$\diamond$}\parfillskip=0pt \finalhyphendemerits=0
\par\smallskip}}
\def\ds{\displaystyle}
\newenvironment{proof}{\noindent{\sl Proof:~~~}}{\quad \qed}
\def\beq{\begin{equation}}
\def\eeq{\end{equation}}
\def\bea{\begin{eqnarray}}
\def\eea{\end{eqnarray}}
\def\beann{\begin{eqnarray*}}
\def\eeann{\end{eqnarray*}}
\def\beasn{\begin{sneqnarray}}
\def\eeasn{\end{sneqnarray}}
\def\ben{\begin{enumerate}}
\def\een{\end{enumerate}}
\def\bit{\begin{itemize}}
\def\eit{\end{itemize}}
\def\proof{ ({\sl Proof\/}) }
\def\derpar#1#2{\displaystyle\frac{\partial{#1}}{\partial{#2}}}
\def\vf{\mbox{\fr X}}
\def\df{{\mit\Omega}}
\def\d{{\rm d}}
\def\Real{\mathbb{R}}
\def\r{\ensuremath{\mathbb{R}}}
\def\rk{{\mathbb R}^{k}}
\def\rkq{\rk \times Q}
\def\Tan{{\rm T}}
\def\Lie{\mathop{\rm L}\nolimits}
\def\inn{\mathop{i}\nolimits}
\def\Cinfty{{\rm C}^\infty}
\def\longto{\longrightarrow}
\def\tabaddress#1{{\small\it\begin{tabular}[t]{c}#1
\\[1.2ex]\end{tabular}}}
\def\qed{\ifvmode\removelastskip\fi
{\unskip\nobreak\hfil\penalty50\hbox{}\nobreak\hfil \hbox{\vrule
height1.2ex width1.2ex}\parfillskip=0pt \finalhyphendemerits=0
\par\smallskip}}
\title{ON A KIND OF NOETHER SYMMETRIES AND
 CONSERVATION LAWS  IN $k$-COSYMPLECTIC FIELD THEORY}
\author{{\sc Juan Carlos Marrero\thanks{{\bf e}-{\it mail}:jcmarrer@ull.es}}
   \\ \tabaddress{Departamento de Matem\'{a}tica Fundamental
   Facultad de Matem\'{a}ticas\\ Universidad de la Laguna, Spain
   }\\
{\sc Narciso Rom\'an-Roy\thanks{{\bf e}-{\it mail}:
   nrr@ma4.upc.edu}}
   \\
   \tabaddress{Departamento de Matem\'atica Aplicada IV.
   Edificio C-3, Campus Norte UPC\\
   C/ Jordi Girona 1. 08034 Barcelona. Spain}
   \\
{\sc Modesto Salgado\thanks{{\bf e}-{\it mail}:
modesto.salgado@usc.es}}
\\
  \tabaddress{Departamento de Xeometr\'\i a e Topolox\'\i a,
     Facultade de Matem\'aticas \\
     Universidade de Santiago de Compostela.
     15782- Santiago de Compostela. Spain}
     \\
 {\sc Silvia Vilari\~no\thanks{{\bf
e}-{\it mail}:
silvia.vilarino@udc.es}} \\
  \tabaddress{Departamento de Matem\'{a}ticas,
     Facultade de Ciencias. \\
     Universidad de A Coru\~{n}a.
     15071-A Coru\~{n}a. Spain}}
\begin{document}

\maketitle

\pagestyle{myheadings}

\thispagestyle{empty}

\begin{abstract}
This paper is devoted to studying symmetries of
certain kinds of $k$-cosymplectic
Hamiltonian systems in first-order classical field theories.
Thus, we introduce a particular class of symmetries and study
the problem of associating conservation laws to them by means of
a suitable generalization of Noether's theorem.
\end{abstract}

 \bigskip
\noindent {\bf Key words}: {\sl Symmetries, Conservation laws, Noether theorem,
Hamiltonian field theories, $k$-cosymplectic manifolds.}

\vbox{\raggedleft AMS s.\,c.\,(2000): 70S05, 70S10, 53D05}\null
\markright{\rm J.C. Marrero {\it et al\/},
   \sl Noether symmetries in $k$-cosymplectic field theory}

 \clearpage


\section{Introduction}
\label{intro}

The {\sl $k$-cosymplectic formalisms}
is one of the simplest geometric frameworks for describing
first-order classical field theories.
It is the generalization to field
theories of the standard cosymplectic formalism for non-autonomous mechanics,
 \cite{mod1,mod2}, and it describes field
theories involving the coordinates in the basis on the Lagrangian
 and on the Hamiltonian. The foundations of the $k$-cosymplectic formalism
are the $k$-cosymplectic manifolds \cite{mod1,mod2}.

Historically, it is based on the so-called {\sl polysymplectic formalism}
developed by G\"{u}nther \cite{gun}, who introduced
{\sl polysymplectic manifolds}. A refinement of this concept
led to define {\sl $k$-symplectic manifolds} \cite{aw1,aw2,aw3}, which are
polysymplectic manifolds admiting {\sl Darboux-type coordinates}
 \cite{mt1}.
(Other different polysymplectic formalisms
for describing field theories have been also proposed
\cite{Sarda2,Kana,McN,No2,No5,Sd-95}).

The natural extension of the $k$-symplectic manifolds are the
$k$-cosymplectic manifolds. All of this is discused in Section \ref{hamil},
which is devoted to make a review on the main features
 and characteristics of $k$-cosymplectic manifolds and of $k$-cosymplectic Hamiltonian systems.
 We also introduce the notion of {\sl almost standard $k$-cosymplectic manifold},
 which are those that we are interested in this paper.

The main objective of this paper is  to study  symmetries and conservation
laws on first-order classical field theories, from the Hamiltonian viewpoint,
 using the $k$-cosymplectic description,
and considering only the regular case. These problems have been
treated for $k$-symplectic field theories in \cite{fam,RSV-09},
generalizing the results obtained for non-autonomous mechanical
systems
 (see, in particular, \cite{LM-96}, and references quoted therein).
 We further remark that the problem of symmetries in field theory
has also been analyzed using other geometric frameworks, such as
the {\sl multisymplectic models} (see, for instance,
\cite{CM-2003,EMR-99,Go2,Gymmsy,Kijo,KijTul,LMS-2004}).

In this way, in Section \ref{ntmvf}
 we recover the idea of {\sl conservation law}
 or {\sl conserved quantity}.
 Then, we introduce a particular kind of symmetries for
 (almost-standard) $k$-cosymplectic Hamiltonian systems,
 essentially those transformations preserving the $k$-cosymplectic structure, which
 allows us to state a generalization of Noether's theorem.
 The definition of these so-called {\sl $k$-cosymplectic Noether symmetries}
 is inspired in the ideas introduced by C. Albert in his study of
symmetries for the cosymplectic formalism of autonomous mechanical systems \cite{albert}.

 Finally, as an example, in Section \ref{ex} we describe briefly
 the k-cosymplectic quadratic Hamiltonian systems and we analyze some
Noether symmetries for these kinds of systems
(in particular, for the wave equation).

 In this paper, manifolds are real, paracompact,
  connected and $C^\infty$, maps are $C^\infty$, and sum over crossed repeated
  indices is understood.

\section{Geometric elements. Hamiltonian $k$-cosymplectic formalism}
\label{hamil}

(The contents of this section can be seen in more detail in
\cite{mod2}).

\subsection{k-vector fields and integral sections}

 Let $M$ be an arbitrary manifold, $T^{1}_{k}M$  the Whitney
sum $TM\oplus \stackrel{k}{\dots} \oplus TM$ of $k$ copies of $TM$
and $\tau : T^{1}_{k}M \longrightarrow M$ its  canonical projection.
$T^{1}_{k}M$ is usually called the tangent
bundle of $k^1$-velocities of $M$.

\begin{definition}  \label{kvector}
A {\rm $k$-vector field} on $M$ is a section
${\bf X} : M \longrightarrow T^1_kM$ of the projection $\tau$.
\end{definition}

Giving a $k$-vector field ${\bf X}$ is equivalent to
giving a family of $k$ vector fields $X_{1}, \dots, X_{k}$ on $M$
obtained by projecting ${\bf X}$ onto every factor; that is,
$X_A=\tau_A\circ{\bf X}$, where $\tau_A\colon T^1_kM \rightarrow
TM$ is the canonical projection onto the $A^{th}$-copy $TM$ of
$T^1_kM$. For this reason we will denote a $k$-vector field by
${\bf X}=(X_1, \ldots, X_k)$.

\begin{definition} \label{integsect} An
{\rm integral section}  of the $k$-vector field  \, $(X_{1}, \dots,
X_{k})$ \, passing through a point $x\in M$  is a map
 $\psi:U_0\subset \r^k \rightarrow M$, defined on some neighborhood
 $U_0$ of $0\in \rk$, such that
$$
\psi(0)=x, \quad
\psi_{*}(t)\left(\displaystyle\frac{\partial}{\partial t^A}\Big\vert_t\right) =
 X_{A}(\psi (t)) \, \quad \mbox{for every} \quad
t\in U_0 \, ,
$$
A $k$-vector field ${\bf X}$ is {\rm integrable} if
every point of~$M$ belongs to the image of an integral section of~${\bf X}$.
\end{definition}

In coordinates, if
$\ds X_A = X_A^i \frac{\partial}{\partial q^i}$,
then $\psi$ is an integral section of $\mathbf{X}$ if, and only if,
the following system of partial differential equations holds:
$$
\frac{\partial \psi^i}{\partial t^A} = X_A^i\circ \psi .
$$

We remark that a $k$-vector field ${\bf X}=(X_1,\ldots , X_k)$ is
integrable if, and only if, the vector fields $X_1, \ldots , X_k$
generate a completely integrable distribution of rank $k$. This is
the geometric expression of the integrability condition of the
preceding differential equation (see, for instance,
\cite{Die,Lee}).

 Observe that, in case $k=1$,
 this definition coincides with
the definition of integral curve of a vector field.

\subsection{$k$-symplectic manifolds}

The  polysymplectic structures were introduced in \cite{gun} and the
$k$-symplectic structures in \cite{aw1,mt1}.
\begin{definition}
\label{defaw}
Let $M$ be a  differentiable manifold of dimension $N=n+kn$.
\ben
\item
A {\rm polysymplectic structure} on $M$
is a family $(\omega_0^A)$ ($1\leq A\leq k$),
where each $\omega_0^A\in\df^2(M)$ is a closed form,
such that
$$
\cap_{A=1}^{k} \ker\omega_0^A = \{0\} .
$$
Then $(M,\omega_0^A)$ is called a {\rm polysymplectic manifold}.

\item
A {\rm $k$-symplectic structure} on $M$ is a family
$(\omega_0^A,V) $ ($1\leq A\leq k$), such that $(M,\omega_0^A)$ is
a polysymplectic manifold and $V$ is an integrable
$nk$-dimensional tangent distribution on $M$ satisfying that,
\[
 \omega_0^A \vert_{V\times V} =0, \; \; \mbox{ for every } A.
\]
Then $(M,\omega_0^A,V)$ is called a {\rm $k$-symplectic manifold}.
\een
The $k$-symplectic (resp., polysymplectic) structure is {\rm
exact} if $\omega_0^A =\d \theta_0^A$, for all $A$.
\end{definition}

\begin{theorem}\label{dtk}
\cite{mt1}.
Let $(\omega_0^A,V)$ be a $k$-symplectic structure on~$M$.
For every point of $M$ there exists a neighbourhood $U$
and local coordinates
$(q^i , p^A_i)$ ($1\leq i\leq n$, $1\leq A\leq k$)
such that, on~$U$,
$$
\omega_0^A=  dq^i\wedge dp^A_i ,
\quad
V =
\left\langle  \frac{\partial}{\partial p^1_i}, \dots,
\frac{\partial}{\partial p^k_i} \right\rangle_{i=1,\ldots,n} .
$$
\end{theorem}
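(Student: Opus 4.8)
The plan is to reduce the statement to a Darboux-type normalization carried out in two stages: a pointwise linear-algebra model that pins down the algebraic shape of the structure, followed by an integration step that upgrades this model to genuine coordinates, using repeatedly that the forms $\omega_0^A$ are closed.

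First I would settle the pointwise picture. Fix $x\in M$ and consider the linear map $\Psi_x\colon V_x\to\bigoplus_{A=1}^k(\Tan_xM/V_x)^*$ sending $w$ to the family of covectors $v\bmod V_x\mapsto\omega_0^A(w,v)$; this is well defined precisely because $\omega_0^A\vert_{V\times V}=0$. If $\Psi_x(w)=0$ then $\iota_w\omega_0^A$ vanishes on $V_x$ (isotropy) and on a complement, hence on all of $\Tan_xM$, so $w\in\bigcap_A\ker\omega_0^A=\{0\}$; thus $\Psi_x$ is injective. Since $\dim V_x=nk=k\cdot n=\dim\bigoplus_A(\Tan_xM/V_x)^*$, the map $\Psi_x$ is an isomorphism. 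This is the fact I will lean on throughout: it encodes the nondegeneracy of the structure as the invertibility of the pairing between $V$ and the transverse directions.

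Next I would produce the coordinates. Because $V$ is an integrable distribution of corank $n$, Frobenius gives, near $x$, functions $q^1,\dots,q^n$ with independent differentials whose common level sets are the leaves of $V$, together with fibre coordinates $s^\alpha$ ($1\le\alpha\le nk$) so that $V=\langle\partial/\partial s^\alpha\rangle$ and the $dq^i$ span the annihilator $V^\circ$. The condition $\omega_0^A\vert_{V\times V}=0$ means that in these coordinates $\omega_0^A$ has no $ds^\alpha\wedge ds^\beta$ terms, so $\omega_0^A=\tfrac12 f^A_{ij}\,dq^i\wedge dq^j+g^A_{i\alpha}\,dq^i\wedge ds^\alpha$. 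The coefficient array $(g^A_{i\alpha})$, read as the $nk\times nk$ matrix with rows indexed by the pairs $(A,i)$ and columns by $\alpha$, is exactly the matrix of $\Psi_x$ up to sign, hence invertible. The candidate momenta $p^A_i$ should satisfy $\partial p^A_i/\partial s^\alpha=g^A_{i\alpha}$; the integrability of this fibrewise system, $\partial g^A_{i\alpha}/\partial s^\beta=\partial g^A_{i\beta}/\partial s^\alpha$, is precisely the vanishing of the $dq^i\wedge ds^\alpha\wedge ds^\beta$ component of $d\omega_0^A$, so closedness of $\omega_0^A$ lets me solve for $p^A_i(q,s)$ by the Poincar\'e lemma in the fibre variables (with $q$ as parameters). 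Invertibility of $(g^A_{i\alpha})$ then shows that $\{dp^A_i\vert_V\}$ is a basis of $V^*$, whence $\{dq^i,dp^A_i\}$ is a coframe and $(q^i,p^A_i)$ a coordinate system with $V=\langle\partial/\partial p^A_i\rangle$.

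Finally I would correct the residual term. By construction the mixed parts of $\omega_0^A$ and of $dq^i\wedge dp^A_i$ agree, and neither has $ds\wedge ds$ terms, so $\eta^A:=\omega_0^A-dq^i\wedge dp^A_i$ only involves $dq^i\wedge dq^j$. Since $dq^i\wedge dp^A_i=d(q^i\,dp^A_i)$ is exact, $\eta^A$ is closed, and closedness forces its coefficients to be independent of the $s^\alpha$; thus $\eta^A$ is a closed two-form in the $q$-variables alone, so the Poincar\'e lemma gives $\eta^A=d\beta^A$ with $\beta^A=\beta^A_i(q)\,dq^i$. Replacing $p^A_i$ by $p^A_i-\beta^A_i(q)$ absorbs $\eta^A$ and yields $\omega_0^A=dq^i\wedge dp^A_i$, while leaving $V=\langle\partial/\partial p^A_i\rangle$ and the coframe property untouched because $\beta^A_i$ depends only on $q$. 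The main obstacle I anticipate is organizational rather than computational: one must normalize all $k$ forms simultaneously and keep the single distribution $V$ fixed throughout, and the two appeals to closedness must be kept distinct — the first to integrate the fibre system for the $p^A_i$, the second to recognize the leftover as an exact form living only on the $q$-directions.
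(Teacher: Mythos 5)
Your argument is correct and complete as far as I can check. The key points all hold up: the pointwise map $\Psi_x\colon V_x\to\bigoplus_A(\Tan_xM/V_x)^*$ is well defined by isotropy of $V$, injective by $\cap_A\ker\omega_0^A=\{0\}$, and an isomorphism by the dimension count $nk=k\cdot n$; the integrability condition $\partial g^A_{i\alpha}/\partial s^\beta=\partial g^A_{i\beta}/\partial s^\alpha$ is indeed exactly the $dq\wedge ds\wedge ds$ component of $d\omega_0^A=0$; the Jacobian of $(q,s)\mapsto(q,p)$ is block-triangular with invertible block $(g^A_{i\alpha})$, so $(q^i,p^A_i)$ is a chart with $V=\ker\{dq^i\}=\langle\partial/\partial p^A_i\rangle$; and the residual form $\eta^A$ is closed, purely of type $dq\wedge dq$, hence basic and exact, and is absorbed by the shift $p^A_i\mapsto p^A_i-\beta^A_i(q)$ (only a sign convention on $\beta^A$ needs care). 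Note, however, that there is nothing in the paper to compare against: Theorem \ref{dtk} is stated without proof and attributed to the reference \cite{mt1}, where it is obtained within the integrability theory of $p$-almost cotangent structures (i.e.\ as an integrability statement for a $G$-structure), rather than by the direct two-stage Darboux-type normalization you give. Your route is more elementary and self-contained, and it also makes visible that only the restriction of the nondegeneracy condition to $V$ is actually used in constructing the chart. Two small points worth making explicit if you write this up: the invertibility of $(g^A_{i\alpha})$ is needed at every point of the chart, not just at $x$ (which your pointwise lemma does supply), and the final Poincar\'e lemma in the $q$-variables requires shrinking $U$ to a product neighbourhood so that the $q$-projection is contractible.
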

These are called \emph{Darboux} or \emph{canonical coordinates}
of the $k$-symplectic manifold.

The canonical model of a $k$-symplectic manifold is
$((T^1_k)^*Q,\omega_0^A,V)$,
where $Q$ is a $n$-dimensional differentiable manifold
 and $(T^1_k)^*Q= T^*Q \oplus \stackrel{k}{\dots} \oplus
T^*Q$ is the Whitney sum of $k$ copies of the cotangent bundle
$T^*Q$,
which is usually called the {\sl bundle of $k^1$-covelocities} of $Q$
(see \cite{kms}).
We have the natural projections
$$
\begin{array}{ccccccccc}
\pi^A \colon & (\Tan^1_k)^*Q & \rightarrow & \Tan^*Q & ; &
(\pi_Q)_1 \colon & (\Tan^1_k)^*Q & \to & Q
\\
& (q;\alpha^1_q, \ldots ,\alpha^k_q) &\mapsto& (q;\alpha^A_q) &  &
& (q;\alpha^1_q, \ldots ,\alpha^k_q)&\mapsto& q
\end{array} \ .
$$
The manifold
$(T^1_k)^*Q$ can be identified with the manifold $J^1(Q,\rk)_0$ of
1-jets of mappings from $Q$ to $\rk$ with target at $0\in \rk$,
that is
 \[\begin{array}{ccc}
  J^1(Q,\r^k)_0 & \equiv & T^*Q \oplus \stackrel{k}{\dots} \oplus T^*Q \\
 j^1_{q,0}\sigma_{Q}  & \equiv & (d\sigma_{Q}^1(q), \dots ,d\sigma_{Q}^k(q))
\end{array}\]
 where $\sigma_Q^A= \pi^A_0 \circ \sigma_Q:Q \longrightarrow \r$ is the
 $A^{th}$ component of $\sigma_Q$, and  $\pi^A_0:\r^k \to \r$ is the canonical
 projection onto the $A$ component.

$(\Tan^1_k)^*Q$ is endowed with the canonical forms
$$
\theta^A=(\pi^A)^*\theta_0 ,
\quad
\omega_0^A=(\pi^A)^*\omega_0=-(\pi^A)^*\d\theta_0\ ,
$$
where $\theta_0$ and $\omega_0=-\d\theta_0$ are the Liouville $1$-form and
the canonical symplectic form on $\Tan^*Q$.
Obviously $\omega_0^A= -\d \theta_0^A$.

If $(q^i)$ ($1\leq i\leq n$) are local coordinates on $U \subset Q$,
the induced coordinates $(q^i ,p^A_i)$ ($1\leq A\leq k$) on
$(\pi^1_Q)^{-1}(U)$ are given by
$$
q^i(q;\alpha^1_q, \ldots ,\alpha^k_q)= q^i(q) ,
\quad
p^A_i(q;\alpha^1_q, \ldots ,\alpha^k_q)=
\alpha^A_q\left(\frac{\partial}{\partial q^i}\Big\vert_q\right) .
$$
Then we have
$$
\theta_0^A = p^A_i\d q^i ,
\quad
\omega_0^A =
\d q^i\wedge\d p^A_i .
$$
Thus, the triple  $((\Tan^1_k)^*Q,\omega_0^A, V)$, where $V=\ker
\Tan (\pi_Q)_1$, is a $k$-symplectic manifold, and the natural
coordinates in $(\Tan^1_k)^*Q$ are Darboux coordinates.

\subsection{$k$-cosymplectic manifolds}

\begin{definition}
\label{deest}
Let ${\cal M}$ be a a differentiable manifold of dimension $N=k+n+kn$.
\ben
\item
A {\rm polycosymplectic structure} in ${\cal M}$ is a family
$(\eta^A,\omega^A) $ ($1\leq A\leq k$), where
$\eta^A\in\df^1(M)$ and $\omega^A\in\df^2({\cal M})$
are closed forms satisfying that
\begin{enumerate}
\item
$\eta^1\wedge\ldots\wedge\eta^k\not=0$.
\item
$({\cap_{A=1}^{k}}\ker\omega^A{\cap_{A=1}^{k}}\ker\eta^A)=\{0\}$.
\een
Then, $({\cal M},\eta^A,\omega^A)$ is said to be a {\rm polycosymplectic manifold}.
\item
A {\rm $k$--cosymplectic structure}  in ${\cal M}$ is a family
$(\eta^A,\omega^A,{\cal V})$ such that
 $({\cal M},\eta^A,\omega^A)$ is a polycosymplectic manifold,
 and ${\cal V}$ is an $nk$-dimensional integrable distribution on ${\cal M}$,
 satisfying that
\begin{enumerate}
\item
$\eta^A\vert_{\cal V}=0$.
\item
$\omega^A\vert_{{\cal V}\times {\cal V}}=0$.
\end{enumerate}
Then, $({\cal M},\eta^A,\omega^A,{\cal V})$
 is said to be a {\rm $k$--cosymplectic manifold}.
\een
The $k$-cosymplectic (resp., polycosymplectic) structure
is {\rm exact} if $\omega^A =\d \theta^A$, for all $A$.
\end{definition}

For every $k$-cosymplectic structure  $(\eta^A ,\omega^A,{\cal V})$ on ${\cal M}$,
 there exists a family of $k$ vector fields
$\{R_A\}_{\, 1\leq A\leq k}$, which are called {\sl Reeb vector fields},
 characterized by the following conditions \cite{mod1}
$$
\inn(R_A)\eta^B=\delta^B_A \quad ,\quad \inn(R_A)\omega^B=0\quad ;
\quad 1\leq A,B \leq k \ .
$$

\begin{theorem}  {\rm (Darboux Theorem)} \cite{mod1}:
If ${\cal M}$ is a $k$--cosymplectic manifold, then for every point of
$M$ there exists a local chart of coordinates $(t^A,q^i,p^A_i)$,
$1\leq A\leq k$, $1\leq i \leq n$, such that
$$
\eta^A=\d t^A,\quad \omega^A=\d q^i\wedge\d p^A_i,
 \quad {\cal V}=\left\langle \frac{\partial} {\partial p^1_i}, \dots,
 \frac{\partial}{\partial p^k_i}\right\rangle_{i=1,\ldots , n}, \quad
 R_A=\derpar{}{t^A}.
$$
\label{darbo}
\end{theorem}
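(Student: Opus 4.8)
The plan is to reduce the statement to the $k$-symplectic Darboux theorem (Theorem~\ref{dtk}) by slicing $\cal M$ transversally to the Reeb directions. First I would treat the ``time'' variables. Since each $\eta^A$ is closed and $\eta^1\wedge\cdots\wedge\eta^k\neq 0$, the distribution $\ker\eta:=\bigcap_{A}\ker\eta^A$ has constant corank $k$ and, being annihilated by closed $1$-forms, is integrable; the Poincar\'e lemma then yields functions $t^A$ with $\eta^A=\d t^A$ near the chosen point. Next I would record that the Reeb fields commute: from $\inn(R_A)\eta^B=\delta^B_A$, $\inn(R_A)\omega^B=0$ and $\d\eta^B=\d\omega^B=0$, Cartan's formula gives $\Lie_{R_A}\eta^B=0$ and $\Lie_{R_A}\omega^B=0$, whence $\inn([R_A,R_B])\eta^C=0$ and $\inn([R_A,R_B])\omega^C=0$ for every $C$; by the nondegeneracy condition in Definition~\ref{deest} this forces $[R_A,R_B]=0$. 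The $R_A$ are independent and transverse to $\ker\eta$, so a simultaneous flow-box adapted to a fixed leaf $S_0$ of $\ker\eta$ through the point gives coordinates with $R_A=\partial/\partial t^A$; and since $\Lie_{R_B}\eta^A=0$ with $\eta^A$ and $\d t^A$ agreeing along $S_0$, one keeps $\eta^A=\d t^A$ and $S_0=\{t^A=0\}$.

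The second step transfers the ``space'' part to the slice. Because $\mathcal V\subseteq\ker\eta=TS_0$, the pair $(\omega^A|_{S_0},\mathcal V|_{S_0})$ lives on the $(n+nk)$-dimensional manifold $S_0$, and I would check it is a $k$-symplectic structure. The forms $\omega^A|_{S_0}$ are closed, $\mathcal V|_{S_0}$ is integrable of dimension $nk$ and isotropic by hypothesis, and the radical condition $\bigcap_A\ker(\omega^A|_{S_0})=\{0\}$ holds: if $v\in T_xS_0$ annihilates every $\omega^A|_{S_0}$, then $\omega^A(v,R_B)=-\big(\inn(R_B)\omega^A\big)(v)=0$, so $\inn(v)\omega^A=0$ on all of $T_x\cal M$, and together with $v\in\ker\eta$ the Definition~\ref{deest} nondegeneracy gives $v=0$. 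Applying Theorem~\ref{dtk} to $S_0$ then furnishes canonical coordinates $(q^i,p^A_i)$ with $\omega^A|_{S_0}=\d q^i\wedge\d p^A_i$ and $\mathcal V|_{S_0}=\langle\partial/\partial p^A_i\rangle$.

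I would then extend $q^i,p^A_i$ off the slice as first integrals of the Reeb fields ($R_Bq^i=R_Bp^A_i=0$), so that $(t^A,q^i,p^A_i)$ is a chart with $R_A=\partial/\partial t^A$ and $\eta^A=\d t^A$. The identity $\omega^A=\d q^i\wedge\d p^A_i$ propagates off $S_0$ automatically: both two-forms are annihilated by every $\inn(\partial/\partial t^B)=\inn(R_B)$ and are $\Lie_{R_B}$-invariant, hence both equal the pullback under $(t,q,p)\mapsto(q,p)$ of their common restriction to $S_0$, and so they coincide throughout the chart.

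The delicate point, which I expect to be the main obstacle, is the last assertion $\mathcal V=\langle\partial/\partial p^A_i\rangle$ on the whole chart rather than only along $S_0$. Since $R_A=\partial/\partial t^A$ commutes with $\partial/\partial p^B_i$, this equality is equivalent to the invariance of $\mathcal V$ under the Reeb flows, i.e. $[R_A,\mathcal V]\subseteq\mathcal V$. One checks only that $[R_A,\mathcal V]\subseteq\ker\eta$ comes for free from $\Lie_{R_A}\eta^B=0$, but the isotropy and dimension of $\mathcal V$ by themselves do not determine its $\omega$-orthogonal complement inside $\ker\eta$; hence establishing that the leafwise polarizations on the slices $\{t=\mathrm{const}\}$ assemble into a single $t$-independent vertical distribution is the crux, and it is precisely here that the structural hypotheses on $\mathcal V$ must be used in an essential way.
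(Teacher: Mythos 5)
The paper gives no proof of this theorem: it is quoted from \cite{mod1}, so there is nothing internal to compare your route against. Your strategy of slicing along $\ker\eta$ and invoking the $k$-symplectic Darboux theorem \ref{dtk} on a slice $S_0$ is the natural one, and the intermediate steps are sound: the local primitives $t^A$ of the closed, independent $\eta^A$; the commutativity of the Reeb fields via Cartan's formula together with the nondegeneracy axiom of Definition \ref{deest}; the simultaneous flow box giving $R_A=\partial/\partial t^A$ and $\eta^A=\d t^A$; the verification that $(\omega^A|_{S_0},{\cal V}|_{S_0})$ is $k$-symplectic; and the propagation of $\omega^A=\d q^i\wedge\d p^A_i$ off $S_0$ by Reeb-invariance of both sides.

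The gap you flag at the end is, however, genuine, and it cannot be closed from Definition \ref{deest} as stated in this paper. A chart in which simultaneously $R_A=\partial/\partial t^A$ and ${\cal V}=\left\langle\partial/\partial p^A_i\right\rangle$ forces $[R_A,{\cal V}]\subseteq{\cal V}$, and this Reeb-invariance of ${\cal V}$ is not a consequence of the axioms $\eta^A\vert_{\cal V}=0$, $\omega^A\vert_{{\cal V}\times{\cal V}}=0$, integrability and $\dim{\cal V}=nk$. Concretely, take $k=n=1$, ${\cal M}=\Real^3$ with coordinates $(t,q,p)$, $\eta=\d t$, $\omega=\d q\wedge\d p$ and ${\cal V}=\left\langle \partial/\partial p+t\,\partial/\partial q\right\rangle$: every condition of Definition \ref{deest} holds, yet $R=\partial/\partial t$ and $[R,\partial/\partial p+t\,\partial/\partial q]=\partial/\partial q\notin{\cal V}$, so no chart of the stated form can exist. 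Thus what is missing is not a technicality you overlooked but an additional structural hypothesis --- invariance of ${\cal V}$ under the Reeb flows, equivalently $[R_A,{\cal V}]\subseteq{\cal V}$ --- which must be supplied by the precise definition used in \cite{mod1} and which does hold in the canonical model $\rk\times(T^1_k)^*Q$, where ${\cal V}=\ker T(\pi_Q)_{1,0}$. Once that hypothesis is granted, your argument closes immediately: ${\cal V}$ and $\left\langle\partial/\partial p^A_i\right\rangle$ are two Reeb-invariant distributions that agree along $S_0$, hence agree on the whole chart.
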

These are called \emph{Darboux} or \emph{canonical coordinates}
of the $k$-cosymplectic manifold.

The canonical model for $k$-cosymplectic manifolds is
$(\rk\times (T^1_k)^*Q,\eta^A,\omega^A,{\cal V})$.
The manifold $J^1\pi_{Q}$ of  1-jets of sections of the trivial
bundle $\pi_{Q}:\rk \times Q \to Q$ is diffeomorphic to
$\rk \times(T^1_k)^*Q$.
We use also the following notation for the
canonical projections
$$
(\pi_Q)_1\colon\rk \times (T^1_k)^*Q \stackrel{(\pi_{Q})_{1,0}}{\longrightarrow}
\rk \times Q \stackrel{\pi_{Q}}{\longto} Q
$$
given by
$$
\pi_Q(t,q)=q, \quad (\pi_Q)_{1,0}(t,\alpha^1_q, \ldots ,\alpha^k_q)=(t,q), \quad
(\pi_Q)_1(t,\alpha^1_q, \ldots ,\alpha^k_q)=q \, ,
$$
with $t\in\rk $, $q\in Q$ and $(\alpha^1_q, \ldots ,\alpha^k_q)\in(T^1_k)^*Q$.

If $(q^i)$ are local coordinates on $U \subseteq Q$, then the
induced local coordinates  $(t^A,q^i ,p^A_i)$
on $[(\pi_Q)_1]^{-1}(U)=\rk \times (T^1_k)^*U$ are given by
$$
t^A(t,\alpha^1_q, \ldots ,\alpha^k_q) = t^A; \quad
q^i(t,\alpha^1_q, \ldots ,\alpha^k_q) = q^i(q); \quad
 p^A_i(t,\alpha^1_q, \ldots ,\alpha^k_q) =
\alpha^A_q\left(\ds\frac{\partial}{\partial q^i}\Big\vert_q \right)
$$

On $\rk\times (T^1_k)^*Q$, we define the differential forms
$$
\eta^A=(\pi^A_1)^*\d t^A\, , \quad \theta^A= (\pi^A_2)^*\theta_0\, ,
\quad
\omega^A= (\pi^A_2)^*\omega_0\, ,
$$
where $\pi^A_1:\rk \times (T^1_k)^*Q \rightarrow \r$ and
$\pi^A_2:\rk \times (T^1_k)^*Q \rightarrow T^*Q$ are the projections
defined by
$$
\pi^A_1(t,(\alpha^1_q, \ldots ,\alpha^k_q))=t^A
\,,\quad  \pi^A_2(t,(\alpha^1_q, \ldots ,\alpha^k_q))=\alpha^A_q\, ,
$$

In local coordinates we have
$$
  \eta^A=\d t^A \, , \quad \theta^A = \displaystyle \sum_{i=1}^n \, p^A_i
 dq^i \,   , \quad \omega^A = \displaystyle \sum_{i=1}^n dq^i
\wedge dp^A_i\, .
$$

Moreover, let ${\cal V}=ker \, T(\pi_Q)_{1,0}$. Then
$\ds{\cal V}=\left\langle \frac{\partial}{\partial p^1_i}, \dots,
\frac{\partial}{\partial p^k_i}\right\rangle_{i=1,\ldots , n}$.

Hence $(\rk\times (T^1_k)^*Q,\eta^A,\omega^A,{\cal V})$ is a
$k$-cosymplectic manifold, and the natural coordinates of $\rk\times
(T^1_k)^*Q$ are Darboux coordinates for this canonical
$k$-cosymplectic structure. Furthermore, $\ds
\left\{\derpar{}{t^A}\right\}$ are the Reeb vector fields of this
structure.

Now, let $\varphi:\rkq\longto\rkq$ be a
diffeomorphism of $\pi_{Q}$-fiber bundles, and let $\varphi_{Q}:Q\longto Q$ be the
diffeomorphism induced on the base. We can lift $\varphi$ to a
diffeomorphism $j^{1*}\varphi:\rk\times (T^1_k)^*Q\longto\rk\times
(T^1_k)^*Q$ such that the following diagram commutes:
\[\xymatrix{J^1\pi_Q\equiv\rk\times (T^1_k)^*Q
\ar[rr]^{\txt{\small{$j^{1\,*}\varphi$}}}
\ar[dd]_{\txt{\small{$(\pi_{Q})_{1,0}$}}} & &
J^1\pi_Q\equiv\rk\times (T^1_k)^*Q
\ar[dd]^{\txt{\small{$(\pi_{Q})_{1,0}$}}}
\\ \\
\rkq \ar[rr]^{\txt{\small{$\varphi$}}}
\ar[dd]_{\txt{\small{$\pi_{Q}$}}}
& & \rkq
\ar[dd]^{\txt{\small{$\pi_{Q}$}}}
\\ \\
Q \ar[rr]^{\txt{\small{$\varphi_{Q}$}}} & & Q }
\]

\begin{definition}
Let $\varphi:\rkq\longto\rkq$ be a $\pi_Q$-bundles
 morphism in the above conditions. The {\rm
canonical prolongation of the diffeomorphism $\varphi$} is the map
$j^{1*}\varphi:J^1\pi_Q\longto J^1\pi_Q$ given by
\[(j^{1*}\varphi)(j^1_q\sigma):=
j^1_{\varphi_Q(q)}(\varphi\circ \sigma\circ \varphi_Q^{-1})\,,\]
where $\sigma=(\sigma_{\rk},Id_Q)$ and
$\sigma_{\rk}:Q\stackrel{\sigma}{\longto}\rkq\stackrel{\pi_{\rk}}{\longto}\rk$.
\end{definition}

It is clear that this definition is valid, because choosing other
representative $\sigma '$ with the same $1$-jet at $q$ gives the
same result, that is, $j^{1*}\varphi(j^1_q\sigma)$ is well defined.

In local coordinates, if
$\varphi(t^B,q^j)=(\varphi^A(t^B,q^j),\varphi^i_Q(q^j))$ then
$$
j^{1*}\varphi(t^B, q^j,p^B_j)=\left(\varphi^A(t^B,
q^j),\varphi_Q^j(q),\left(\ds\frac{\partial \varphi^A}{\partial
q^k}+p^B_k\ds\frac{\partial \varphi^A}{\partial
t^B}\right)\ds\frac{\partial (\varphi_Q^{-1})^k}{\partial
q^i}\Big\vert_{\varphi_Q(q^j)}\right)\,.
$$

\begin{definition}
Let $Z\in\mathfrak{X}(\rkq)$ be a $\pi_{Q}$-projectable vector
field, with local $1$-parameter group of transformations
$\varphi_s:\rkq\longto \rkq$. Then the local $1$-parameter group
of transformations $j^{1*}\varphi_s:\rk\times
(T^1_k)^*Q\longto\rk\times (T^1_k)^*Q$ generates a vector field
$Z^{1*}\in\mathfrak{X}(\rk\times (T^1_k)^*Q)$, which is called the
{\rm complete lift of $Z$ to $\rk\times (T^1_k)^*Q$.}
\end{definition}

If the local expression of $Z\in\vf(\rkq)$ is
$Z=Z_A\ds\frac{\partial }{\partial
t^A}+Z^i\ds\frac{\partial}{\partial q^i}$, then
\[
Z^{1*}=Z^A\ds\frac{\partial }{\partial
t^A}+Z^i\ds\frac{\partial}{\partial q^i}+\left(\ds\frac{d
Z_A}{dq^i}-p^A_j\ds\frac{dZ^j}{dq^i}\right)\ds\frac{\partial
}{\partial p^A_i}\,,
\]
where $\ds\frac{d}{dq^i}$ denotes the total derivative, that is,
$\ds\frac{d}{dq^i}=\ds\frac{\partial }{\partial
q^i}+p^B_i\ds\frac{\partial }{\partial t^B}$.

\subsection{$k$-cosymplectic Hamiltonian systems}

Along this paper we are interested only in a kind of $k$-cosymplectic manifolds:
those which are of the form ${\cal M}=\rk\times M$, where
$(M,\omega_0^A,V)$ is a generic $k$-symplectic manifold.
Then,  denoting by
$$
\pi_{\rk}:\rk \times M \rightarrow \rk \quad , \quad
\pi_M:\rk \times M \rightarrow M
$$
 the canonical projections, we have the differential forms
$$
\eta^A=\pi_{\rk}^*\d t^A\quad , \quad \omega^A= \pi_M^*\omega^A_0\, ,
$$
and the distribution $V$ in $M$ defines a distribution ${\cal V}$ in
${\cal M}=\rk\times M$ in a natural way. All the conditions given in definition \ref{deest}
are verified, and hence $\rk\times M$ is a $k$-cosymplectic manifold.
From the Darboux Theorem \ref{dtk} we have local coordinates $(t^A,q^i,p^A_i)$ in $\rk\times M$.

Observe that the standard model is a particular class of these kinds of $k$-cosymplectic manifolds,
where $M=(T^1_k)^*Q$.

\begin{definition}
These kinds of $k$-cosymplectic manifolds
will be called {\rm almost-standard $k$-cosymplectic manifolds}.
\end{definition}

Consider an almost-standard $k$-cosymplectic manifold
$(\rk\times M,\eta^A,\omega^A,{\cal V})$, and let
$H\in\Cinfty( \rk\times M)$ be a Hamiltonian function.
The couple $(\rk\times M, H)$ is called a {\sl $k$-cosymplectic
Hamiltonian system}.

We denote by
$\vf^k_H(\rk\times M)$ the set of (local) $k$-vector fields
${\bf X}=(X_1,\dots,X_k)$ on $\rk\times M$
wich are solutions to the equations
\begin{equation}
\label{geonah}
 \eta^A(X_B)=\delta^A_B \quad , \quad \ds\sum_{A=1}^k \inn(X_A)\omega^A =
\d H-\ds\sum_{A=1}^k R_A( H) \eta^A \quad   .
\end{equation}
Since $R_A=\partial/\partial t^A$ and $\eta^A=\d t^A$, then
we can write locally  the above equations as follows
$$
\d t^A(X_B)=\delta^A_B \quad , \quad  \ds\sum_{A=1}^k\inn(X_A)\omega^A = \d  H-
\ds\sum_{A=1}^k\frac{\partial  H}{\partial t^A}\d t^A
\quad   .
$$

Furthermore, for a section $ \psi \colon  I \subset \rk\to \rk\times M$
of the projection $\pi_{\rk}$, the {\sl Hamilton-de Donder-Weyl equations}
 for this system are
\begin{equation}
 \ds\sum_{A=1}^k \inn\left(\psi_{*}(t)\left(\frac{\partial}{\partial t^A}\Big\vert_t\right)\right) (\omega^A\circ\psi) =
  \left[\d H-\ds\sum_{A=1}^k R_A( H) \eta^A\right]\circ\psi\  ,
\label{he0}
\end{equation}
 In Darboux coordinates, if
$ \psi(t)=( \psi^A(t), \psi^i(t), \psi^A_i(t))$,
as $\psi$ is a section of the projection $\pi_{\rk}$, it implies that
$\psi^A(t)= t^A$
the above equations leads to the equations
 \begin{equation}
 \label{he}
  \frac{\partial  H}{\partial q^i}=
-\ds\sum_{A=1}^k\frac{\partial \psi^A_i}{\partial t^A}
 \quad , \quad
 \frac{\partial  H}{\partial p^A_i}=
\frac{\partial \psi^i}{\partial t^A}\, .
\end{equation}

The relation between equations (\ref{geonah}) and (\ref{he0}) is given by the following:

\begin{theorem}
  Let ${\bf X}=(X_1,\dots,X_k)\in\vf^k_H(\rk\times M)$
 (i.e.; it is a $k$-vector field on
$\rk\times M$ which is a solution to the geometric Hamiltonian
equations (\ref{geonah})). If a section
  $\psi:\rk\to \rk\times M$  of $\pi_{\rk}$ is an integral section
of ${\bf X}$ then $\psi$ is a solution
to the Hamilton-de Donder-Weyl field equations (\ref{he0}).
\label{maint}
\end{theorem}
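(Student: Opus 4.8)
The plan is to prove the statement by a direct pointwise substitution: the geometric equations (\ref{geonah}) hold at \emph{every} point of $\rk\times M$, hence in particular along the image of $\psi$; evaluating them there and using the defining property of an integral section to identify the contracted vectors yields (\ref{he0}) immediately.

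First I would fix $t$ in the domain of $\psi$ and recall from Definition \ref{integsect} that, since $\psi$ is an integral section of ${\bf X}=(X_1,\dots,X_k)$,
$$
\psi_{*}(t)\left(\frac{\partial}{\partial t^A}\Big\vert_t\right)=X_A(\psi(t)),\qquad 1\leq A\leq k .
$$
Next I would evaluate the second equation of (\ref{geonah}) at the point $\psi(t)$; this is legitimate because ${\bf X}\in\vf^k_H(\rk\times M)$ means the identity holds as an equality of $1$-forms at each point. Substituting $X_A(\psi(t))$ by $\psi_{*}(t)(\partial/\partial t^A|_t)$ in the contraction $\sum_A\inn(X_A)\omega^A$, and reading $\omega^A$ and $\d H-\sum_A R_A(H)\eta^A$ as evaluated along $\psi$ (which is the meaning of the symbol $\circ\,\psi$), turns (\ref{geonah}) precisely into (\ref{he0}). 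Since $t$ was arbitrary, this proves the claim.

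For completeness, and to connect with the local equations (\ref{he}), I would also verify the computation in Darboux coordinates $(t^A,q^i,p^A_i)$. Writing $X_A=(X_A)_B\,\partial/\partial t^B+(X_A)^i\,\partial/\partial q^i+(X_A)^B_i\,\partial/\partial p^B_i$ and using $\omega^A=\d q^i\wedge\d p^A_i$ together with $\d H-\sum_A R_A(H)\,\d t^A=(\partial H/\partial q^i)\d q^i+(\partial H/\partial p^A_i)\d p^A_i$, the second equation of (\ref{geonah}) splits, on comparing the coefficients of $\d p^A_i$ and of $\d q^i$, into $(X_A)^i=\partial H/\partial p^A_i$ and $-\sum_A (X_A)^A_i=\partial H/\partial q^i$. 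The integral section condition then gives $(X_A)^i\circ\psi=\partial\psi^i/\partial t^A$ and $(X_A)^A_i\circ\psi=\partial\psi^A_i/\partial t^A$ (recalling that $\psi^A(t)=t^A$, because $\psi$ is a section of $\pi_{\rk}$), and substituting these recovers exactly (\ref{he}).

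I do not expect a serious obstacle here: the result is essentially a definition-chasing identity, its content being that (\ref{he0}) is nothing but (\ref{geonah}) restricted to the image of $\psi$. The only points that require care are the correct interpretation of the composite notation $\omega^A\circ\psi$ (a $2$-form taken along $\psi$ and contracted with the tangent vector $\psi_{*}(\partial/\partial t^A)$) and the bookkeeping of indices in the coordinate contraction $\inn(X_A)\omega^A$; both become routine once the integral section identity is in hand.
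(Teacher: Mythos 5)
Your proposal is correct and follows essentially the same route as the paper: the paper's own proof is exactly your Darboux-coordinate computation (deriving $(X_A)^i=\partial H/\partial p^A_i$ and $-\sum_A(X_A)^A_i=\partial H/\partial q^i$ from (\ref{geonah}) and then substituting the integral-section identities $\partial\psi^i/\partial t^A=(X_A)^i\circ\psi$, $\partial\psi^A_i/\partial t^A=(X_A)^A_i\circ\psi$ to obtain (\ref{he})). Your additional coordinate-free paragraph is a valid and slightly cleaner packaging of the same substitution, but it is not a different argument.
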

\proof \
 Let ${\bf X}=(X_1,\dots,X_k)\in\vf^k_H(\rk\times M)$ be locally given by
$$
X_A=(X_A)^B\frac{\partial}{\ds\partial t^B} +
(X_A)^i\frac{\partial}{\ds\partial q^i}
+(X_A)^B_i\frac{\partial}{\ds\partial p^B_i} \ ,
$$
then, from (\ref{geonah}) we obtain
\begin{equation}
(X_A)^B=\delta_A^B \quad , \quad
   \ds\frac{\partial H}{\partial p^A_i}=(X_A)^i \quad, \quad
    \ds\frac{\partial H}{\partial q^i}= -\ds\sum_{A=1}^k(X_A)^A_i \ ,
\label{111} 
\end{equation}
   and if $\psi:\rk\to\rk\times M$, locally given by
$\psi(t)=(t^A,\psi^i(t),\psi^A_i(t))$, is an integral section
of ${\bf X}$ , then
$$
 \ds\frac{\partial \psi^i}{\partial t^B}=(X_B)^i, \quad
\ds\frac{\partial \psi^A_i}{\partial t^B}=(X_B)^A_i \,  .
$$
Therefore, from (\ref{111}) we obtain that $\psi(t)$ is a solution
to the Hamiltonian field equations (\ref{he}).
\qed

And, conversely, we have:

\begin{lem}\label{int}
If  a section $\psi:\rk\to \rk\times M$ of $\pi_{\rk}$
is a solution to the Hamilton-de Donder-Weyl equation (\ref{he0})
 and $\psi$ is an integral section of ${\bf X}=(X_1,\dots,X_k)$, then
 ${\bf X}=(X_1,\dots,X_k)$ is solution to the equations
(\ref{geonah}) at the points of the image of $\psi$.
\end{lem}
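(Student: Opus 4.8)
The plan is to reverse, step by step, the coordinate computation carried out in the proof of Theorem~\ref{maint}. The essential observation is that, since $\{\d t^A,\d q^i,\d p^A_i\}$ is a local coframe on $\rk\times M$ and the forms $\omega^A=\d q^i\wedge\d p^A_i$ carry no $\d t$-component, the two sides of the second equation in (\ref{geonah}) have automatically matching $\d t^A$-components; hence (\ref{geonah}) is not merely implied by, but \emph{equivalent} to, the local conditions (\ref{111}), namely $(X_A)^B=\delta^B_A$, $\partial H/\partial p^A_i=(X_A)^i$ and $\partial H/\partial q^i=-\sum_A(X_A)^A_i$. It therefore suffices to establish (\ref{111}) at every point of the image of $\psi$. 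To this end I would write $X_A=(X_A)^B\partial_{t^B}+(X_A)^i\partial_{q^i}+(X_A)^B_i\partial_{p^B_i}$ and $\psi(t)=(\psi^A(t),\psi^i(t),\psi^A_i(t))$.

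First I would exploit the two hypotheses separately. Because $\psi$ is a section of $\pi_{\rk}$ we have $\psi^A(t)=t^A$, whence $\partial\psi^A/\partial t^B=\delta^A_B$. Because $\psi$ is an integral section of $\mathbf{X}$, the defining relations $\partial\psi^A/\partial t^B=(X_B)^A\circ\psi$, $\partial\psi^i/\partial t^B=(X_B)^i\circ\psi$ and $\partial\psi^A_i/\partial t^B=(X_B)^A_i\circ\psi$ hold. Combining the first of these with $\partial\psi^A/\partial t^B=\delta^A_B$ already yields $(X_B)^A\circ\psi=\delta^A_B$, i.e.\ $\eta^A(X_B)=\delta^A_B$ on the image of $\psi$, which is the first group of equations in (\ref{111}).

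Next I would feed the remaining integrability relations into the Hamilton-de Donder-Weyl equations. Since $\psi$ solves (\ref{he0}), its coordinate form (\ref{he}) holds at the points $\psi(t)$; substituting $\partial\psi^i/\partial t^A=(X_A)^i\circ\psi$ into the second equation of (\ref{he}) gives $\partial H/\partial p^A_i=(X_A)^i$ along $\psi$, and substituting $\partial\psi^A_i/\partial t^A=(X_A)^A_i\circ\psi$ into the first gives $\partial H/\partial q^i=-\sum_A(X_A)^A_i$ along $\psi$. Together with the relation obtained in the previous step, this is exactly (\ref{111}) evaluated at the points of the image of $\psi$, and hence, by the equivalence noted at the outset, (\ref{geonah}) holds at every such point.

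The computation is entirely routine, so there is no genuine analytic obstacle; the only point demanding care is the localization of the conclusion. The integral section property constrains the components of $\mathbf{X}$ solely at the points lying on the image of $\psi$, and nothing can be inferred off that image --- this is precisely why the statement asserts (\ref{geonah}) only along $\psi$, rather than as a pointwise identity on all of $\rk\times M$. I would therefore keep the qualifier ``$\circ\,\psi$'' attached to every equality throughout, and invoke the equivalence (\ref{geonah})$\Leftrightarrow$(\ref{111}) only after all three families of scalar identities have been secured along $\psi$.
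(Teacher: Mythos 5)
Your proposal is correct and follows essentially the same route as the paper: write everything in Darboux coordinates, use the integral-section relations $\partial\psi^i/\partial t^B=(X_B)^i\circ\psi$, $\partial\psi^A_i/\partial t^B=(X_B)^A_i\circ\psi$ together with the coordinate form (\ref{he}) of the Hamilton--de Donder--Weyl equations to recover (\ref{111}) along the image of $\psi$. Your explicit remarks that (\ref{geonah}) is \emph{equivalent} to (\ref{111}) (since the $\d t^A$-components match trivially) and that the first group $\eta^A(X_B)=\delta^A_B$ follows from the section property are points the paper leaves implicit, but they do not change the argument.
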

\proof \
We must prove that
\begin{equation}\label{01}
    \ds\frac{\partial H}{\partial p^A_i}(\psi(t))=
  (X_A)^i(\psi(t)), \quad \ds\frac{\partial H}{\partial q^i}(\psi(t))=
   -\ds\sum_{A=1}^k(X_A)^A_i(\psi(t)) \ ,
\end{equation}
now as $\psi(t)=(t^A,\psi^i(t),\psi^A_i(t))$ is integral section of ${\bf X}$ we have that
\begin{equation}\label{02}
 \ds\frac{\partial \psi^i}{\partial t^B}(t)=(X_B)^i(\psi(t)), \quad
\ds\frac{\partial \psi^A_i}{\partial t^B}(t)=(X_B)^A_i(\psi(t)) \, .
\end{equation}

 As $\psi$ is a solution to the Hamilton-de Donder-Weyl equation (\ref{he})
then, from  (\ref{02}), we deduce (\ref{01}). \qed

We cannot claim that  ${\bf X}\in\vf^k_H(\rk\times M)$
because we cannot assure that ${\bf X}$ is a solution to the equations
 (\ref{geonah}) everywhere in $\rk\times M$.

\begin{prop}\label{he-geonah}
If  $\psi:U_0\subset\rk\longto \rk\times M$ is a  solution to the Hamilton-de
Donder-Weyl equation (\ref{he0}), then for each $t\in U_0$ there
exist a neighborhood $U_t$ of $t$ and a $k$-vector field
 ${\bf X}^t=(X^t_1,\dots,X^t_k)$ on $\psi(U_t)$ which is solution to the
equations (\ref{geonah}) in $\psi(U_t)$.
\end{prop}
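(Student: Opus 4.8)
The plan is to build the required $k$-vector field directly along the image of $\psi$, by pushing forward the coordinate fields $\partial/\partial t^A$ of $\rk$, and then to read off the conclusion from Lemma \ref{int}, which already asserts that any $k$-vector field having $\psi$ as an integral section must satisfy (\ref{geonah}) at the points of that image. So the real content is an \emph{existence} statement: producing, near a given $t$, a $k$-vector field defined on the submanifold $\psi(U_t)$ whose integral section is $\psi$.

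First I would fix $t\in U_0$ and note that, being a section of $\pi_{\rk}$, the map $\psi$ is injective and an immersion. Indeed, $\pi_{\rk}\circ\psi=\mathrm{Id}$ on $U_0$ gives $(\pi_{\rk})_{*}\circ\psi_{*}=\mathrm{Id}$, so $\psi_{*}(s)$ is injective for each $s$. Shrinking $U_0$ to a suitable neighborhood $U_t$ of $t$, the restriction $\psi|_{U_t}$ becomes an embedding and $\psi(U_t)$ a $k$-dimensional submanifold of $\rk\times M$. On it I would define ${\bf X}^t=(X^t_1,\dots,X^t_k)$ by
$$
X^t_A(\psi(s)) := \psi_{*}(s)\left(\frac{\partial}{\partial t^A}\Big\vert_s\right),\qquad s\in U_t .
$$
Injectivity of $\psi|_{U_t}$ makes this assignment unambiguous, and the embedding property makes it a smooth $k$-vector field on $\psi(U_t)$. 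By the very definition of integral section (Definition \ref{integsect}), $\psi|_{U_t}$ is then an integral section of ${\bf X}^t$.

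It remains to verify (\ref{geonah}) on $\psi(U_t)$. Writing $\psi(s)=(s^A,\psi^i(s),\psi^A_i(s))$, the pushforward reads, along $\psi$,
$$
X^t_A = \frac{\partial}{\partial t^A} + \frac{\partial \psi^i}{\partial t^A}\frac{\partial}{\partial q^i} + \frac{\partial \psi^B_i}{\partial t^A}\frac{\partial}{\partial p^B_i},
$$
so that $(X^t_A)^B=\delta^B_A$ and the first equation $\eta^A(X^t_B)=\d t^A(X^t_B)=\delta^A_B$ holds identically. For the second equation I would invoke Lemma \ref{int}: since $\psi$ solves the Hamilton--de Donder--Weyl equations (\ref{he0}) and is an integral section of ${\bf X}^t$, that lemma yields exactly the relations (\ref{01}), which are the coordinate form of $\sum_A\inn(X^t_A)\omega^A=\d H-\sum_A R_A(H)\eta^A$ at the points of the image. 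Hence ${\bf X}^t$ is a solution of (\ref{geonah}) on $\psi(U_t)$.

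The only delicate point is the well-definedness and smoothness of ${\bf X}^t$, which is precisely why one restricts to a neighborhood $U_t$ on which $\psi$ is an embedding; without this the pushforward vectors need not assemble into a genuine vector field on the image. Once the embedding is in place, the proposition is essentially a repackaging of Lemma \ref{int}, with no further analytic difficulty. I would also emphasize that this yields a solution of (\ref{geonah}) only along the $k$-dimensional image, not on a full open set of $\rk\times M$ — exactly the limitation flagged in the remark following Lemma \ref{int}.
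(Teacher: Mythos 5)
Your proposal is correct and follows essentially the same route as the paper: both define $X^t_A(\psi(s))=\psi_*(s)\bigl(\partial/\partial t^A\big\vert_s\bigr)$ on a neighborhood where the section $\psi$ is an injective immersion (hence the field is well defined on the embedded image), and both then conclude via Lemma \ref{int}. Your extra remarks on well-definedness and on the first equation $\eta^A(X^t_B)=\delta^A_B$ only make explicit what the paper leaves implicit.
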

\begin{proof}
 If $\psi:U_0\subset \r^k \to \rk\times M$ is a solution to the Hamilton-de
Donder-Weyl equation (\ref{he0}) then for every $t\in U_0$ there
exists a neigborhood $U_t\subset U_0$ of $t$, and a neigborhood
coordinate system $(W_t,s^A,q^i,p^A_i)$ of $\psi(t)$, such that
$\psi(U_t)=W_t\subset \psi(U_0)$, and
$\psi(s)=(s,\psi^i(s),\psi^A_i(s))$ for every $s\in U_t$.

As $\psi\vert_{U_t}:U_t\to W_t$ is an injective immersion
($\psi$ is a section and hence its image is an embedded submanifold),
we can define a $k$-vector field ${\bf X}^t=(X^t_1,\ldots,X^t_k)$
in $\psi(U_t)$  as follows
\[
X^t_A(\psi(s))=\psi_*(s)\Big(\frac{\partial}{\partial s^A}\Big\vert_{s}\Big)\quad , \quad  s\in U_t\, ,
\]
and so $\psi\vert_{U_t}$ is an integral section of ${\bf X}^t$. Then,
from the Lemma \ref{int} one obtains that ${\bf X}^t$
is solution to the equations (\ref{geonah}) in $\psi(U_t)$.
\end{proof}

{\bf Remark}:
It should be noticed that, in general, equations (\ref{geonah}) do
not have a single solution. In fact, if
$({\cal M},\eta^A,\omega^A,{\cal V})$ is a
k-cosymplectic manifold we can define the vector bundle morphism
$$
\begin{array}{rccl}
\omega^{\sharp}: & T^1_k{\cal M} & \longrightarrow & T^*{\cal M}  \\
& (X_1,\dots,X_k) & \mapsto &
\omega^{\sharp}(X_1,\dots,X_k) =  \displaystyle \sum_{A=1}^k \,
\inn(X_A)\omega^A
\end{array}
$$
and, denoting by $ {\mathbb M}_k(\mathbb{R})$ the space of
matrices of order $k$ whose entries are real numbers, the vector
bundle morphism
$$
\begin{array}{rccl}
\eta^{\sharp}: & T^1_k{\cal M} & \longrightarrow &
 {\cal M} \times {\mathbb M}_k(\mathbb{R})  \\
\noalign{\medskip} & (X_1,\dots,X_k) & \mapsto &
(\tau(X_1, \ldots , X_k),\eta^A(X_B))\, .
\end{array}
$$
We denote by the same symbols $\omega^{\sharp},\eta^{\sharp}$
 their natural extensions to vector fields and forms.

Now, let $H: {\cal M} \to \mathbb{R}$ be a real
$C^{\infty}$-function on $M$. Then, as in the case of an
almost-standard $k$-cosymplectic manifold, we can consider the set
${\frak X}^{k}_{H}{\cal M}$ of the (local) $k$-vector fields ${\bf
X} = (X_1, \ldots , X_k)$ on ${\cal M}$ which are solutions to the
equations
\begin{equation}\label{HamEqs}
\eta^{A}(X_{B}) = \delta_{B}^{A}, \; \; \displaystyle
\sum_{A=1}^{k} i(X_{A})\omega^A = dH - \sum_{A=1}^{k}
R_{A}(H)\eta^A.
\end{equation}
Moreover, we may prove the following result
\begin{prop}
The solutions of Eqs. (\ref{HamEqs}) are the sections of an affine
bundle of rank $(k-1)(kn+n)$ which is modelled on the vector
subbundle $\ker\omega^{\sharp} \cap \ker \eta^{\sharp}$ of
$T_{k}^{1}{\cal M}$.
\end{prop}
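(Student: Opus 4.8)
The plan is to read (\ref{HamEqs}) as an inhomogeneous linear (hence affine) equation for the $k$-vector field $\mathbf{X}=(X_1,\dots,X_k)$, whose solution set is therefore, fibrewise, a coset of a fixed vector space; the work is then to show this coset is non-empty and varies smoothly, i.e. forms an affine subbundle, and to compute its rank.

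First I would assemble the left-hand sides of (\ref{HamEqs}) into the single fibrewise-linear vector bundle morphism (over $\mathrm{Id}_{\mathcal{M}}$)
\[
\Phi=(\eta^{\sharp},\omega^{\sharp})\colon T^1_k\mathcal{M}\longrightarrow \big(\mathcal{M}\times\mathbb{M}_k(\mathbb{R})\big)\oplus T^*\mathcal{M},
\]
and regard the right-hand sides as the section $\Psi=\big(\mathrm{Id},\,dH-\sum_{A=1}^k R_A(H)\eta^A\big)$ of the target. Then $\mathbf{X}\in\mathfrak{X}^k_H\mathcal{M}$ if and only if $\Phi\circ\mathbf{X}=\Psi$, so over each $x\in\mathcal{M}$ the fibre of solutions is either empty or a coset of $\ker\Phi_x=\ker\omega^{\sharp}_x\cap\ker\eta^{\sharp}_x$; this intersection is the candidate modelling bundle.

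Next I would pass to Darboux coordinates $(t^A,q^i,p^A_i)$, in which $\eta^A=dt^A$, $\omega^A=dq^i\wedge dp^A_i$ and $R_A=\partial/\partial t^A$. Writing $X_B=(X_B)^C\frac{\partial}{\partial t^C}+(X_B)^i\frac{\partial}{\partial q^i}+(X_B)^C_i\frac{\partial}{\partial p^C_i}$, a direct computation gives $\eta^A(X_B)=(X_B)^A$ and
\[
\sum_{A}\inn(X_A)\omega^A=\sum_{A,i}(X_A)^i\,dp^A_i-\sum_i\Big(\sum_A(X_A)^A_i\Big)dq^i .
\]
These expressions have constant coefficients, so $\Phi$ has constant rank; consequently $\ker\Phi$ and $\mathrm{im}\,\Phi$ are genuine vector subbundles. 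The homogeneous system $\Phi\circ\mathbf{X}=0$ imposes $(X_B)^A=0$ ($k^2$ equations), $(X_A)^i=0$ ($kn$ equations) and $\sum_A(X_A)^A_i=0$ ($n$ equations), affecting pairwise disjoint sets of components; hence, with $N=\dim\mathcal{M}=k+n+kn$,
\[
\mathrm{rank}\big(\ker\omega^{\sharp}\cap\ker\eta^{\sharp}\big)=kN-(k^2+kn+n)=n(k^2-1)=(k-1)(kn+n).
\]

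Finally I must guarantee non-emptiness, i.e. that $\Psi$ is a section of $\mathrm{im}\,\Phi$. The matrix part $\mathrm{Id}=(\delta^A_B)$ always lies in $\mathrm{im}\,\eta^{\sharp}$. For the one-form part, $\inn(R_A)\omega^B=0$ forces $\mathrm{im}\,\omega^{\sharp}$ to lie in the annihilator of the Reeb fields, and the displayed local expression shows this annihilator (spanned by $dq^i,dp^A_i$) is exactly $\mathrm{im}\,\omega^{\sharp}$; since $\inn(R_B)\big(dH-\sum_A R_A(H)\eta^A\big)=R_B(H)-R_B(H)=0$, the right-hand one-form lies in $\mathrm{im}\,\omega^{\sharp}$ at every point (equivalently, the equations $(X_A)^i=\partial H/\partial p^A_i$ and $\sum_A(X_A)^A_i=-\partial H/\partial q^i$ are solvable). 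Therefore $\Phi^{-1}(\Psi)\subset T^1_k\mathcal{M}$ is a non-empty affine subbundle modelled on $\ker\omega^{\sharp}\cap\ker\eta^{\sharp}$, of rank $(k-1)(kn+n)$, and $\mathfrak{X}^k_H\mathcal{M}$ is precisely its set of sections. I expect the main obstacle to be exactly this last step: establishing, uniformly in $x$, both the constant rank of $\Phi$ and the membership $\Psi\in\mathrm{im}\,\Phi$, which together promote the pointwise affine structure to a smooth affine bundle; both are secured by the annihilation of the right-hand side on the Reeb directions together with the constant-coefficient normal form from the Darboux Theorem.
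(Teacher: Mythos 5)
Your proposal is correct and follows essentially the same route as the paper: both treat (\ref{HamEqs}) as an affine equation modelled fibrewise on $\ker\omega^{\sharp}\cap\ker\eta^{\sharp}$, identify the image of $\omega^{\sharp}$ (restricted to $\ker\eta^{\sharp}$) with the annihilator of the Reeb vector fields to get the rank count $k(kn+n)-(kn+n)=(k-1)(kn+n)$, and verify that the right-hand side lies in that image. The only cosmetic differences are that the paper establishes the surjectivity onto $\cap_A\langle R_A\rangle^0$ by showing the dual morphism is a monomorphism rather than by your constant-coefficient Darboux computation, and it makes the existence of a global particular solution explicit via a partition of unity, which you leave implicit.
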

\begin{proof}
We consider the vector subbundle $\ker\eta^{\sharp}$ of
$T_k^1{\cal M}$ and the vector bundle morphism
\[
\omega^{\sharp}\vert_{\ker \eta^{\sharp}}: \ker \eta^{\sharp} \to
T^*{\cal M}.
\]
It is clear that this morphism takes values in the vector
subbundle $\displaystyle \cap_{A=1}^k \langle R_{A}\rangle^{0}$ of $T^*{\cal M}$,
where $\langle R_{A}\rangle^{0}$ is the vector subbundle of $T^*{\cal M}$ whose
fiber at the point $x\in {\cal M}$ is
$\{ \alpha \in T^*_x{\cal M} / \alpha(R_A(x)) = 0\}$.
Furthermore, we have that
\[
\ker (\omega^{\sharp}\vert_{\ker \eta^{\sharp}}) = \ker\omega^{\sharp}
\cap \ker\eta^{\sharp}.
\]
We will prove that
\[
\omega^{\sharp}\vert_{\ker \eta^{\sharp}}: \ker \eta^{\sharp} \to
\displaystyle \cap_{A=1}^{k} \langle R_{A}\rangle^0
\]
is an epimorphism of vector bundles.
For this purpose, we will see that the dual morphism
\[
(\omega^{\sharp}\vert_{\ker \eta^{\sharp}})^*: \displaystyle
(\cap_{A=1}^{k} \langle R_{A}\rangle^0)^* \to (\ker \eta^{\sharp})^*
\]
is a monomorphism of vector bundles.

First of all, it is clear that the dual bundle to $\cap_{A=1}^{k}
\langle R_{A}\rangle^0$ (respectively, $\ker\eta^{\sharp}$) may be identified
with the vector bundle whose fiber at the point $x\in {\cal M}$ is
$\displaystyle \cap_{A=1}^k\langle \eta^A(x)\rangle^0$ (respectively,
$\{(\alpha_{1}, \ldots , \alpha_{k}) \in ((T^1_k)^*{\cal M})_x /
\alpha_{A}(R_{B}(x)) = 0, \mbox{ for all } A, B \}$). Under these
identifications, the morphism $\omega^{\sharp}_{|\ker
\eta^{\sharp}}$ is given by
\[
(\omega^{\sharp}\vert_{\ker \eta^{\sharp}})^*(v) = (i(v)\omega^1(x),
\ldots , i(v)\omega^k(x)), \; \; \mbox{ for } v\in \displaystyle
\cap_{A=1}^k \langle \eta^A(x)\rangle^0.
\]
Thus, $(\omega^{\sharp}\vert_{\ker \eta^{\sharp}})^*$ is a monomorphism
of vector bundles.
Then $\omega^{\sharp}_{\ker\eta^{\sharp}}: \ker
\eta^{\sharp} \to \displaystyle \cap_{A=1}^k \langle R_{A}\rangle^0$ is an
epimorphism of vector bundles.

So, as the rank of the vector bundle $\ker \eta^{\sharp}$
(respectively, $\displaystyle \cap_{A=1}^k \langle R_{A}\rangle^0$) is $k(kn +
n)$ (respectively, $kn + n$), we deduce that the rank of the
vector bundle $\ker \omega^{\sharp} \cap \ker \eta^{\sharp}$ is
$(k-1)(kn+n)$.

Furthermore, if $(X_1, \ldots , X_k)$ is a particular
solution of the Eqs. (\ref{geonah}) and ${\bf Z}$ is a section of
the vector bundle $\ker\omega^{\sharp} \cap \ker\eta^{\sharp} \to
{\cal M}$ then $(X_1, \ldots , X_k) + {\bf Z}$ also is a solution
of these equations. In addition, if ${\bf X'}$ and ${\bf X}$ are
solutions of Eqs. (\ref{geonah}) then ${\bf Z} = {\bf X'} - {\bf
X}$ is a section of the vector bundle $\ker\omega^{\sharp} \cap
\ker\eta^{\sharp} \to {\cal M}$.

Finally, if $(t^A, q^i, p^A_i)$ are Darboux coordinates in a
neighborhood $U_x$ of each point $x \in {\cal M}$ then we may
define a local $k$-vector field on $U_x$ that satisfies
(\ref{HamEqs}). For instance, we can put
$$
 (X_1)^1_i=
\ds\frac{\partial H}{\partial q^i}\quad , \quad (X_A)^B_i=0
\,\,\,\mbox{for $A\neq 1\neq B$} \quad , \quad (X_A)^i=
\ds\frac{\partial H}{\partial p^A_i}\, .
$$
Now one can construct a global $k$-vector field, which is a
solution of (\ref{geonah}), by using a partition of unity in the
manifold ${\cal M}$ (see \cite{mod1}).
\end{proof}

\section{Symmetries for $k$-cosymplectic Hamiltonian systems}
\protect\label{ntmvf}

\subsection{Symmetries and conservation laws}

Let $(\rk\times M, H)$ be a $k$-cosymplectic Hamiltonian system.
First, following \cite{Olver}, we introduce the next definition :

\begin{definition}
\label{Olver} A {\rm conservation law} (or a {\rm conserved quantity})
for the Hamilton-de Donder-Weyl equations (\ref{he0})
is a map  ${\cal F}=({\cal F}^1 , \ldots , {\cal F}^k)\colon
\rk\times M \longto \r^k$ such that the divergence of
$$
{\cal F}\circ \psi=({\cal F}^1 \circ \psi, \ldots , {\cal F}^k \circ
\psi)\colon U_0\subset\r^k \longto \r^k
$$
is zero for every solution  $\psi$
 to the Hamilton-de Donder-Weyl equations (\ref{he0}); that
is for all $t\in U_0\subset \rk$,
\begin{equation}\label{cons-law}
0 = [Div({\cal F}\circ\psi)](t)=
\sum_{A=1}^k \frac{\partial ({\cal F}^A \circ \psi)}{\partial t^A}\Big\vert_{t}=
\sum_{A=1}^k\psi_*(t)\Big(\frac{\partial}{\partial t^A}\Big\vert_{t}\Big)({\cal F}^A)
\;.
\end{equation}
\end{definition}

\begin{prop}
The map ${\cal F}=({\cal F}^1,\ldots,{\cal F}^k)\colon \rk\times M\longto \r^k$
 defines a conservation law if, and only if,  for
  every integrable $k$-vector field ${\bf X}=(X_1,\dots,X_k)$
which is a solution to the equations (\ref{geonah}), we have that
\begin{equation}\label{cons-law field}
\sum_{A=1}^k\Lie({X_A}){\cal F}^A=0\;.
\end{equation}
\end{prop}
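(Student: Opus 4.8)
The plan is to reduce the whole equivalence to a single pointwise identity relating the divergence of $\mathcal{F}\circ\psi$ to the function $\sum_{A=1}^k\Lie(X_A)\mathcal{F}^A$. Recall that for a function $\Lie(X_A)\mathcal{F}^A=X_A(\mathcal{F}^A)$, and that if $\psi$ is an integral section of $\mathbf{X}=(X_1,\dots,X_k)$ then, by Definition \ref{integsect}, $\psi_*(t)\big(\frac{\partial}{\partial t^A}\big\vert_t\big)=X_A(\psi(t))$. Substituting this into the expression for the divergence given in (\ref{cons-law}) yields the key identity
\begin{equation}\label{bridge}
[\mathrm{Div}(\mathcal{F}\circ\psi)](t)=\sum_{A=1}^k\psi_*(t)\Big(\frac{\partial}{\partial t^A}\Big\vert_t\Big)(\mathcal{F}^A)=\Big(\sum_{A=1}^k\Lie(X_A)\mathcal{F}^A\Big)(\psi(t)),
\end{equation}
valid whenever $\psi$ is an integral section of $\mathbf{X}$. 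Thus both implications follow once I can, in each direction, pair a solution $\mathbf{X}$ of (\ref{geonah}) with an integral section solving (\ref{he0}).

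For the implication ($\Rightarrow$), assume $\mathcal{F}$ is a conservation law and let $\mathbf{X}$ be an integrable solution of (\ref{geonah}); I want to show $\sum_A\Lie(X_A)\mathcal{F}^A$ vanishes identically. Fix an arbitrary point $x$ in the domain of $\mathbf{X}$. By integrability there is an integral section $\psi$ of $\mathbf{X}$ with $x=\psi(t_0)$. Since $\eta^A(X_B)=\delta^A_B$ forces the $t$-components of $X_A$ to be $(X_A)^B=\delta^B_A$, the integral section satisfies $\partial\psi^B/\partial t^A=\delta^B_A$, so $\psi$ is, up to a translation in $\rk$, a section of $\pi_{\rk}$. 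Hence Theorem \ref{maint} applies and $\psi$ is a solution of (\ref{he0}). Because $\mathcal{F}$ is a conservation law, the left-hand side of (\ref{bridge}) vanishes at $t_0$; by (\ref{bridge}) this gives $\big(\sum_A\Lie(X_A)\mathcal{F}^A\big)(x)=0$. As $x$ was arbitrary, (\ref{cons-law field}) holds.

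For the converse ($\Leftarrow$), assume (\ref{cons-law field}) holds for every integrable solution of (\ref{geonah}) and let $\psi$ be any solution of (\ref{he0}). Fixing $t\in U_0$, Proposition \ref{he-geonah} provides a neighborhood $U_t$ and a $k$-vector field $\mathbf{X}^t$ solving (\ref{geonah}) on $\psi(U_t)$ of which $\psi\vert_{U_t}$ is an integral section; this $\mathbf{X}^t$ is integrable there, since its integral section $\psi\vert_{U_t}$ already passes through every point of $\psi(U_t)$. Applying (\ref{cons-law field}) to $\mathbf{X}^t$ and evaluating (\ref{bridge}) at $t$ gives $[\mathrm{Div}(\mathcal{F}\circ\psi)](t)=0$. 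Since $t$ was arbitrary, $\mathcal{F}$ is a conservation law.

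The main obstacle I anticipate is precisely the domain mismatch in the converse: the field $\mathbf{X}^t$ furnished by Proposition \ref{he-geonah} is defined only on the embedded submanifold $\psi(U_t)$, whereas the hypothesis quantifies over integrable $k$-vector fields defined on (open sets of) $\rk\times M$. To apply the hypothesis rigorously one must extend $\mathbf{X}^t$ to an integrable solution of (\ref{geonah}) on an open neighborhood of $\psi(t)$ — using Darboux coordinates as in the existence argument and gluing by a partition of unity — while keeping $\psi$ as an integral section; alternatively, one observes that $\big(\sum_A\Lie(X_A)\mathcal{F}^A\big)(\psi(t))$ depends only on the tangent vectors $X^t_A(\psi(t))=\psi_*(t)(\partial/\partial t^A)$ and on $d\mathcal{F}^A$ at $\psi(t)$, so the value on the submanifold already determines it. The only other point requiring care is the translation normalization in the forward direction, needed so that Theorem \ref{maint} (stated for genuine $\pi_{\rk}$-sections) may be invoked.
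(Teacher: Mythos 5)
Your proposal is correct and follows essentially the same route as the paper: both directions hinge on the bridge identity $[\mathrm{Div}(\mathcal{F}\circ\psi)](t)=\bigl(\sum_{A}\Lie(X_A)\mathcal{F}^A\bigr)(\psi(t))$ for integral sections, combined with Theorem~\ref{maint} in the forward direction and Proposition~\ref{he-geonah} in the converse. You are in fact somewhat more careful than the paper itself, which silently applies the hypothesis to the $k$-vector field defined only on the submanifold $\psi(U_0)$ (the domain mismatch you flag) and cites Lemma~\ref{int} where Theorem~\ref{maint} is the result actually needed.
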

\begin{proof}
 (\ref{cons-law})$\Rightarrow$ (\ref{cons-law field})\
Let ${\cal F}=({\cal F}^1,\ldots,{\cal F}^k)$ be a conservation
law and ${\bf X}=(X_1,\dots,X_k)\in\vf^k_H(\rk\times M)$ an
integrable $k$-vector field. If  $\psi\colon U_0\subset
\r^k\longto\rk\times M$ is  an integral section of ${\bf X}$, by
the Lemma \ref{int}, we have that $\psi$ is a solution to the
Hamilton-de Donder-Weyl equation (\ref{he0}), and by definition of
integral section we have that $
X_A(\psi(t))=\psi_*(t)\left(\derpar{}{t^A}\Big\vert_t\right) $.
Therefore from (\ref{cons-law}) we obtain (\ref{cons-law field}).

Conversely, (\ref{cons-law field})$\Rightarrow$ (\ref{cons-law}).\
In fact, we must prove that
for every solution  $\psi\colon U_0 \to \rk\times M$
 to the Hamilton-de Donder-Weyl equations (\ref{he0})
 the identity  (\ref{cons-law}) holds.
From Proposition \ref{he-geonah} there exist a $k$-vector field
 ${\bf X}=(X_1,\dots,X_k)$ on $\psi(U_0)$
 which is solution to the equations (\ref{geonah}) and $\psi$
 is an integral section of ${\bf X} $. We know that
\[
X_A(\psi(t))=\psi_*(t)\Big(\frac{\partial}{\partial t^A}\Big\vert_{t}\Big)\quad , \quad t\in U_0\, .
\]
Then  for all $\psi(t)\in \psi(U_0)$
$$
0=\sum_{A=1}^k\Lie(X_A){\cal F}^A (\psi(t) )
 =\sum_{A=1}^k\psi_*(t)\left(\derpar{}{t^A}\Big\vert_t\right)({\cal
F}^A)
$$
\end{proof}

\begin{definition}
\label{symH}
\begin{enumerate}
\item
A {\rm symmetry} of the $k$-cosymplectic Hamiltonian system
$(\rk\times M,H)$ is a diffeomorphism $\Phi\colon \rk\times M\longto \rk\times M $
verifying the following conditions:
\begin{enumerate}
\item
It is a fiber preserving map for the trivial bundle
$\pi_{\rk}\colon\rk\times M\to\rk$; that is, $\Phi$ induces a diffeomorphism
$\phi\colon\rk\to\rk$ such that
$\pi_{\rk}\circ\Phi=\phi\circ\pi_{\rk}$.
\item
For every section
$\psi$ solution to the Hamilton-de Donder-Weyl equations (\ref{he0}),
we have that the section $\Phi\circ\psi\circ\phi^{-1}$ is also a solution to these equations.
\end{enumerate}
\item
An {\rm infinitesimal symmetry}
of the $k$-cosymplectic Hamiltonian system $(\rk\times M,H)$
is a vector field $Y\in\vf(\rk\times M)$
whose local flows are local symmetries.
\end{enumerate}
\end{definition}

As a consequence of the definition, all the results that we state
for symmetries also hold for infinitesimal symmetries.

Symmetries  can be used to generate new conservation laws from
a given conservation law, In fact, a first straightforward consequence of definitions \ref{Olver} and
\ref{symH} is:

\begin{prop}
If $\Phi\colon \rk\times M\longto \rk\times  M$ is
a symmetry of a $k$-cosymplectic
 Hamiltonian system and
 ${\cal F}=({\cal F}^1, \ldots , {\cal F}^k)\colon \rk\times M\longto \r^k$
is a conservation law, then so is
 $\Phi^*{\cal F}=(\Phi^*{\cal F}^1,\ldots,\Phi^*{\cal F}^k)$.
\label{generador}
 \end{prop}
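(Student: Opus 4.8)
The plan is to check the divergence condition of Definition \ref{Olver} directly, transporting an arbitrary solution through the symmetry. Fix any section $\psi\colon U_0\subset\rk\to\rk\times M$ solving the Hamilton-de Donder-Weyl equations (\ref{he0}). Since $\Phi$ is a symmetry, Definition \ref{symH} supplies a base diffeomorphism $\phi\colon\rk\to\rk$ with $\pi_{\rk}\circ\Phi=\phi\circ\pi_{\rk}$, and the section $\bar\psi:=\Phi\circ\psi\circ\phi^{-1}$ is again a solution of (\ref{he0}). Because $\psi$ is a section of $\pi_{\rk}$ we have $\Phi\circ\psi=\bar\psi\circ\phi$, whence, for each $A$,
\[
(\Phi^*{\cal F}^A)\circ\psi={\cal F}^A\circ\Phi\circ\psi=({\cal F}^A\circ\bar\psi)\circ\phi .
\]

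Next I would feed in the two hypotheses. Writing $h^A:={\cal F}^A\circ\bar\psi$ on the domain $\phi(U_0)$ (with coordinates $s$), the fact that $\bar\psi$ is a solution of (\ref{he0}) and that ${\cal F}$ is a conservation law gives, through (\ref{cons-law}), the trace relation $\sum_{A=1}^k\partial h^A/\partial s^A=0$. It then remains to show that the composite $h^A\circ\phi$ is still divergence free. By the chain rule,
\[
\sum_{A=1}^k\frac{\partial (h^A\circ\phi)}{\partial t^A}
=\sum_{A,B=1}^k\left(\frac{\partial h^A}{\partial s^B}\circ\phi\right)\frac{\partial\phi^B}{\partial t^A}.
\]

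This last identity is exactly where the work lies, and I expect it to be the main obstacle: the conservation law for $\bar\psi$ controls only the trace $\sum_A\partial h^A/\partial s^A$, whereas the chain rule produces the full contraction of $\partial h^A/\partial s^B$ against the Jacobian $\partial\phi^B/\partial t^A$. The argument therefore closes at once precisely when the induced base map has unit Jacobian, $\partial\phi^B/\partial t^A=\delta^B_A$ — in particular when $\phi$ is a translation, which is the situation relevant to the $k$-cosymplectic Noether symmetries treated below — since the double sum then collapses to $\big(\sum_A\partial h^A/\partial s^A\big)\circ\phi=0$. For a general fiber-preserving $\phi$ I would recast the conservation-law condition through the associated $(k-1)$-form $\omega_{\cal F}=\sum_A(-1)^{A-1}h^A\,\d t^1\wedge\cdots\widehat{\d t^A}\cdots\wedge\d t^k$, whose closedness is equivalent to the vanishing of the divergence and is preserved under $\phi^*$, and then read off the divergence-free statement for $\Phi^*{\cal F}$; this is the step where the Jacobian cofactors must be tracked carefully.

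An equivalent route, which meets the same obstruction in vector-field language, is to use the characterization of conservation laws in terms of solutions of (\ref{geonah}): for an integrable ${\bf Y}=(Y_1,\dots,Y_k)$ one has $\sum_A\Lie(Y_A)(\Phi^*{\cal F}^A)=\Phi^*\big(\sum_A\Lie(\Phi_*Y_A){\cal F}^A\big)$, so the claim follows once one verifies that $\Phi_*{\bf Y}$, corrected by the same base Jacobian, is again a solution of (\ref{geonah}). Either formulation reduces the Proposition to the compatibility of the naive pullback $\Phi^*{\cal F}^A={\cal F}^A\circ\Phi$ with the transport of solutions under $\phi$.
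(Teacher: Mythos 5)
You follow the same route as the paper's own proof: write $(\Phi^*{\cal F})\circ\psi=({\cal F}\circ\Phi\circ\psi\circ\phi^{-1})\circ\phi$, use that $\bar\psi=\Phi\circ\psi\circ\phi^{-1}$ is again a solution, and try to transfer $Div({\cal F}\circ\bar\psi)=0$ into $Div\big(({\cal F}\circ\bar\psi)\circ\phi\big)=0$. The paper simply asserts this transfer as an equivalence ``on the corresponding domains''; you stop and observe that the chain rule produces the full contraction of the Jacobian of ${\cal F}\circ\bar\psi$ against the Jacobian of $\phi$, whereas the hypothesis controls only its trace. Your diagnosis is correct, and the obstruction is not removable: for a symmetry in the sense of Definition \ref{symH} the induced map $\phi$ need not be a translation, and then the statement can actually fail. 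Concretely, take $k=2$, $Q=\Real$, $H=0$ on $\Real^2\times(T^1_2)^*\Real$; the solutions of (\ref{he}) are the sections with $q$ constant and $\partial\psi^1_1/\partial t^1+\partial\psi^2_1/\partial t^2=0$. The map $\Phi(t^1,t^2,q,p^1_1,p^2_1)=(t^2,t^1,q,p^2_1,p^1_1)$ is a symmetry (with $\phi$ the coordinate swap), and ${\cal F}=(p^1_1,p^2_1)$ is a conservation law, but $\Phi^*{\cal F}=(p^2_1,p^1_1)$ is not: along the solution with $\psi^1_1=t^2$, $\psi^2_1=t^1$ its divergence equals $2$. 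So the Proposition, and the paper's proof of it, are valid only under the additional hypothesis that $\phi_*=\mathrm{id}$, i.e.\ that $\phi$ is a translation --- which is precisely the situation of the $k$-cosymplectic Noether symmetries of Section \ref{kcnsNt} (where $\Phi^*t^A=t^A$, resp.\ $t^A+k^A$), the only symmetries for which the result is subsequently used.

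Neither of your proposed repairs rescues the general statement, and for the same reason. The $(k-1)$-form $\omega_{\cal F}$ is indeed the right invariant object, but its pullback by $\phi$ has components given by the Jacobian-weighted transport $\det(\phi_*)\big((\phi_*)^{-1}\big){}^A_B\,({\cal F}^B\circ\bar\psi\circ\phi)$, so closedness is inherited by that corrected object and not by the naive pullback ${\cal F}^A\circ\Phi$. Likewise, in the vector-field formulation the identity $\sum_A\Lie(Y_A)(\Phi^*{\cal F}^A)=\big(\sum_A\Lie(\Phi_*Y_A){\cal F}^A\big)\circ\Phi$ would finish the proof if $\Phi_*{\bf Y}\in\vf^k_H(\rk\times M)$, but $\eta^A(\Phi_*Y_B)\circ\Phi=(\Phi^*\eta^A)(Y_B)=\partial\phi^A/\partial t^B$, so the first equation of (\ref{geonah}) for $\Phi_*{\bf Y}$ already forces $\phi_*=\mathrm{id}$. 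In short: your argument is complete exactly when $\phi$ is a translation, where it coincides with the paper's; in the general case the gap you identify is a gap in the Proposition itself, not merely in your write-up, and an explicit counterexample exists.
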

 \proof\
 For every section
$\psi$ solution to the Hamilton-de Donder-Weyl equations and for every $t\in\rk$,
we have that
$$
(\Phi^*{\cal F}\circ\psi)(t)=({\cal F}\circ\Phi\circ\psi)(t)=
({\cal F}\circ\Phi\circ\psi\circ\phi^{-1}\circ\phi)(t)=
({\cal F}\circ\Phi\circ\psi\circ\phi^{-1})(\phi(t))\ ,
$$
and therefore
$$
Div(\Phi^*{\cal F}\circ\psi)=0\ \Leftrightarrow\ Div({\cal F}\circ\Phi\circ\psi\circ\phi^{-1})=0
$$
on the corresponding domains. But
the last equality holds since ${\cal F}$ is a conservation law and
$\Phi\circ\psi\circ\phi^{-1}$ is also a solution to the Hamilton-de Donder-Weyl equations.
 \qed

The following proposition gives a characterization of symmetries
in terms of $k$-vector fields.

\begin{prop}
Let $(\rk\times M,H)$ be a $k$-cosymplectic Hamiltonian system and
$\Phi\colon\rk\times M\to\rk\times M$ a fiber preserving diffeomorphism
for the trivial bundle $\pi_{\rk}\colon\rk\times M\to\rk$.
 \ben
\item
 For every integrable $k$-vector field
${\bf X}=(X_1,\dots,X_k)$ and  for every integral section  $\psi$ of ${\bf X}$,
the section $\Phi\circ\psi\circ\phi^{-1}$ is an integral section of the $k$-vector field
 $\Phi_*{\bf X}=(\Phi_*X_1,\dots,\Phi_*X_k)$,
 and hence $\Phi_*{\bf X}$ is integrable.
 \item
 $\Phi$ is a symmetry if, and only if, for every integrable $k$-vector field
 ${\bf X}=(X_1,\dots,X_k)\in\vf^k_H(\rk\times M)$, then
$\Phi_*{\bf X}=(\Phi_*X_1,\dots,\Phi_*X_k)\in\vf^k_H(\rk\times M)$.
\een
\label{pro4}
\end{prop}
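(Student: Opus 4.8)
The plan is to prove the two items in order, relying throughout on the naturality of the pushforward under a diffeomorphism. The central observation is that since $\Phi$ is a diffeomorphism, pushing forward a $k$-vector field commutes with composing an integral section with $\Phi$ in the appropriate sense, once the reparametrization by $\phi$ on the base is taken into account.

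For item (1), I would argue directly from the definition of integral section (Definition \ref{integsect}). Let $\psi\colon U_0\subset\rk\to\rk\times M$ be an integral section of ${\bf X}=(X_1,\dots,X_k)$, so that $\psi_*(t)(\partial/\partial t^A|_t)=X_A(\psi(t))$. Set $\tilde\psi=\Phi\circ\psi\circ\phi^{-1}$; this is defined on $\phi(U_0)$ and, since $\Phi$ covers $\phi$ (i.e. $\pi_{\rk}\circ\Phi=\phi\circ\pi_{\rk}$), it is again a section of $\pi_{\rk}$. The key computation is to differentiate $\tilde\psi$ using the chain rule. Writing $s=\phi(t)$ and applying $\tilde\psi_*(s)$ to $\partial/\partial s^A|_s$, one must push the coordinate vector $\partial/\partial s^A$ back through $\phi^{-1}$, then through $\psi$, then through $\Phi$. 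The only delicate point is that $\phi^{-1}_*(\partial/\partial s^A)$ is a \emph{combination} of the $\partial/\partial t^B$, so one does not recover a single $X_A$ but rather a linear combination $\sum_B (\partial(\phi^{-1})^B/\partial s^A)\,\Phi_*X_B$. I would handle this by noting that what item (1) genuinely asserts is that $\tilde\psi$ is an integral section of the pushed-forward family, and verifying the defining relation $\tilde\psi_*(s)(\partial/\partial s^A|_s)=(\Phi_*X_A)(\tilde\psi(s))$; the cleanest route is to observe that at the level of $\Phi$ alone (ignoring $\phi$) we have $(\Phi\circ\psi)_*=\Phi_*\circ\psi_*$, so $\Phi_*X_A$ is tangent to the image $\Phi(\psi(U_0))=\tilde\psi(\phi(U_0))$, and then to check that the reparametrization $\phi^{-1}$ aligns the parameter derivatives correctly. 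Integrability of $\Phi_*{\bf X}$ then follows because $\tilde\psi$ provides an integral section through each point of the image, and $\Phi$ being a diffeomorphism carries the completely integrable rank-$k$ distribution $\langle X_1,\dots,X_k\rangle$ to the completely integrable rank-$k$ distribution $\langle\Phi_*X_1,\dots,\Phi_*X_k\rangle$.

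For item (2), I would use item (1) together with the definition of symmetry (Definition \ref{symH}) and the characterization of solutions via integral sections. For the forward implication, suppose $\Phi$ is a symmetry and ${\bf X}\in\vf^k_H(\rk\times M)$ is integrable. Take any integral section $\psi$ of ${\bf X}$; by Theorem \ref{maint}, $\psi$ solves the Hamilton-de Donder-Weyl equations (\ref{he0}), so by the symmetry hypothesis $\Phi\circ\psi\circ\phi^{-1}$ also solves (\ref{he0}). By item (1), $\Phi\circ\psi\circ\phi^{-1}$ is an integral section of $\Phi_*{\bf X}$; invoking Lemma \ref{int} at the points of its image shows $\Phi_*{\bf X}$ solves (\ref{geonah}) there, and since through every point of $\rk\times M$ passes such an integral section (as ${\bf X}$, hence $\Phi_*{\bf X}$, is integrable), $\Phi_*{\bf X}\in\vf^k_H(\rk\times M)$. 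For the converse, assume the pushforward condition holds; given a solution $\psi$ of (\ref{he0}), I would use Proposition \ref{he-geonah} to obtain locally an integrable ${\bf X}\in\vf^k_H$ having $\psi$ as integral section, conclude $\Phi_*{\bf X}\in\vf^k_H$ by hypothesis, and then use item (1) plus Theorem \ref{maint} to deduce that $\Phi\circ\psi\circ\phi^{-1}$ is an integral section of a solution $k$-vector field and hence itself solves (\ref{he0}), which is exactly condition (b) of Definition \ref{symH}.

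The main obstacle I anticipate is bookkeeping the base reparametrization $\phi$ correctly in item (1): the pushforward $\Phi_*X_A$ does not \emph{a priori} match $\tilde\psi_*(\partial/\partial s^A)$ on the nose because $\phi$ mixes the coordinate directions on $\rk$. One must verify that the definition of symmetry, which requires $\Phi$ to \emph{induce} $\phi$ on the base, is precisely the compatibility needed for the parametrizations to line up, so that $\Phi_*{\bf X}$ is genuinely a $k$-vector field (a section of $\tau$ over $\rk\times M$) with $\tilde\psi$ as integral section. Once this compatibility is pinned down, the remainder is a routine transport of the equations (\ref{geonah}) under the diffeomorphism, combined with the already-established equivalences between (\ref{geonah}), (\ref{he0}), and integral sections from Theorem \ref{maint}, Lemma \ref{int}, and Proposition \ref{he-geonah}.
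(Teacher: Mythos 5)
Your proposal follows essentially the same route as the paper: item 1 is obtained by applying the chain rule to $\Phi\circ\psi\circ\phi^{-1}$ and verifying the defining relation of an integral section, and item 2 is deduced from item 1 together with Theorem \ref{maint}, Lemma \ref{int} and Proposition \ref{he-geonah}, exactly as in the text. The one step you flag but do not close is, however, genuinely the delicate one, and your proposed resolution is not quite right. Since $\phi^{-1}_*\left(\derpar{}{t^A}\right)=\sum_B\derpar{(\phi^{-1})^B}{t^A}\,\derpar{}{t^B}$, the chain rule gives
$(\Phi\circ\psi\circ\phi^{-1})_*\left(\derpar{}{t^A}\right)=\sum_B\derpar{(\phi^{-1})^B}{t^A}\,(\Phi_*X_B)\circ(\Phi\circ\psi\circ\phi^{-1})$,
and this collapses to $(\Phi_*X_A)\circ(\Phi\circ\psi\circ\phi^{-1})$ only when $\derpar{(\phi^{-1})^B}{t^A}=\delta^B_A$, i.e.\ when $\phi$ is a translation. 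Fiber preservation alone (the mere existence of an induced $\phi$) is \emph{not} the needed compatibility, contrary to what you suggest at the end. The paper's own proof hides the same assumption by treating $\phi$ as a ``change of global coordinates'' and writing $\psi_*\bigl(\phi^{-1}_*(\partial/\partial t^A)\bigr)=X_A\circ\psi\circ\phi^{-1}$, which again uses $\phi^{-1}_*(\partial/\partial t^A)=\partial/\partial t^A$ implicitly. In the contexts where the proposition is actually applied this hypothesis is automatic: the symmetries produced by Proposition \ref{previa} induce $\phi(t^A)=t^A+k^A$, and in item 2 the condition $\eta^A(\Phi_*X_B)=\delta^A_B$ amounts to $\derpar{\phi^A}{t^B}=\delta^A_B$, which forces the same conclusion. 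So your outline matches the paper's and is correct in substance, but to make item 1 airtight you should state explicitly that $\phi_*(\partial/\partial t^A)=\partial/\partial t^A$ (or restrict to translations) rather than leaving it as a compatibility ``to be checked.''
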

\proof\
\ben
\item
Given $x\in \rk\times M$, let $\psi\colon U_0\subset \r^k \to \rk\times M$
 an integral section of ${\bf X}$ passing through $x$; that is $\psi(0)=x$, then
 $\Phi\circ\psi\circ\phi^{-1}\colon\phi(U_0)\subset \r^k \to \rk\times M$
  is a section passing through $\Phi(x)$; that is,
 if $t_0=\phi(0)$, then $(\Phi\circ\psi\circ\phi^{-1})(t_0)=\Phi(x)$.
 
  Next we have to prove that $\Phi\circ\psi\circ\phi^{-1}$
 is an integral section of $\Phi_*{\bf X}$; that is,
 for every $t\in\phi(U_0)$, and for every $A=1,\dots,k$,
 $$
 (\Phi\circ\psi\circ\phi^{-1})_*(t)\left(\derpar{}{t^A}\Big\vert_t\right)=
 (\Phi_*X_A)((\Phi\circ\psi\circ\phi^{-1})(t)) \ ,
 $$
 or, what is equivalent, that the following diagram is commutative
$$
\xymatrix{
T(\rk\times M)
\ar[rrrrrr]^{\txt{\small{$\Phi_*$}}} & & & & & & T(\rk\times M)
\\ \\
& &
\ar[uull]_{\txt{\small{$X_A$}}}
 \rk\times M
\ar[rr]^(.5){\txt{\small{$\Phi$}}}
& &\rk\times M
\ar[uurr]^{\txt{\small{$ \Phi_*X_A$}}} & &
\\ \\
 & & \rk 
 \ar[uu]_{\txt{\small{$\psi$}}}
  \ar[ddll]^{\txt{\small{$\phi^{-1}_*\left(\derpar{}{t^A}\right)$}}}
 \ar[rr]^{\txt{\small{$\phi$}}} & & \rk 
 \ar[uu]^{\txt{\small{$\Phi\circ\psi\circ\phi^{-1}$}}} 
\ar[ddrr]_ {\txt{\small{$\derpar{}{t^A}$}}}
\\  \\
T\rk
\ar[rrrrrr]_ {\txt{\small{$\phi_*$}}}
\ar[uuuuuu]_{\txt{\small{$\psi_*$}}}  & & & & & &
\ar[uuuuuu]^{\txt{\small{$\,(\Phi\circ\psi\circ\phi^{-1})_*$}}} 
 T\rk
}
$$
 First, we must take into account that the diffeomorphism $\phi\colon\rk\to\rk$
 makes a change of global coordinates in $\rk$; that is,
 $\phi(\tilde t^A)=(t^A)$, and then, if $\psi$
 is an integral section of ${\bf X}$, we have that
 \beann
 \psi_*(\phi^{-1}(t))\left(\derpar{}{\tilde t^A}\Big\vert_{\phi^{-1}(t)}\right)&=&
  \psi_*(\phi^{-1}(t))\left(\phi^{-1}_*(t)\left(\derpar{}{t^A}\Big\vert_t\right)\right)\\ &=&
X_A(\psi\circ\phi^{-1})(t)) \ .
\eeann
  Then, we obtain
\beann
(\Phi\circ \psi\circ\phi^{-1})_*(t)\left(\derpar{}{t^A}\Big\vert_t\right) &=&
\Phi_*(\psi(\phi^{-1}(t))\left(\psi_*(\phi^{-1}(t))\left(\phi^{-1}_*(t)
\left(\derpar{}{t^A}\Big\vert_t\right)\right)\right)
  \\ &=&
  \Phi_*(\psi(\phi^{-1}(t))\left(X_A(\psi(\phi^{-1}(t))\right) \\ &=& 
  (\Phi_*X_A)((\Phi\circ\psi\circ\phi^{-1})(t))  \ . 
\eeann
\item
   $( \Rightarrow )$ Now,
let $x$ be an arbitrary point of $\rk \times M$ and $\psi$ be an
integral section of ${\bf X}$
 passing trough the point $\Phi^{-1}(x)$, that is $\psi(0)=\Phi^{-1}(x)$.
 We know that
   $\psi$ is a solution to the Hamilton-de Donder-Weyl equations (\ref{he0}).
 Since $\Phi$ is a symmetry, $\Phi\circ \psi\circ\phi^{-1}$ is a solution
 to the Hamilton-de Donder-Weyl equations (\ref{he0}) and,
 by the item 1,
it is an integral section of $\Phi_* {\bf X}$ passing trough
the point $ \Phi(\psi(0))=\Phi(\Phi^{-1}(x))=x$
(this means that $(\Phi\circ \psi\circ\phi^{-1})(\phi(0)= (\Phi\circ\psi)(0))=x$).
 Hence, from Lemma
\ref{int}, we deduce that $\Phi_* {\bf X}\in\vf^k_H(\rk\times M)$ at the
points $(\Phi\circ \psi)(t)$,
 in particular at the arbitrary point $(\Phi\circ\psi)(0)=x $.

$( \Leftarrow )$ Conversely, let $\psi\colon U_o\subset\rk\to\rk\times M$ be a solution
 to the Hamilton-de Donder-Weyl equations (\ref{he0},
 then (see Proposition \ref{he-geonah})
   there exists a $k$-vector field
${\bf X}=(X_1,\dots,X_k)$ on $\psi(U_0)$
 which is solution to the equations (\ref{geonah}) and $\psi$
 is an integral section of ${\bf X} $ in $\psi(U_0)$.

Since ${\bf X}$ is solution to (\ref{geonah}), then
 $\Phi_*{\bf X}=(\Phi_*X_1,\dots,\Phi_*X_k)
\in\vf^k_H(\rk\times M)$ by hypothesis, and then,
as a consequence of the item 1 and theorem \ref{maint}, $\Phi \circ \psi\circ\phi^{-1}$
 is a solution to the Hamilton-de Donder-Weyl equations (\ref{he0}).
\qed
\een

As a consequence of this,  if $\Phi$ is a symmetry and
${\bf X}$ is an integrable $k$-vector field in $\vf^k_H(\rk\times M)$,
we have that
 $\Phi_*{\bf X}-{\bf X}\in\ker\,\omega^{\sharp}\cap \ker\,\eta^{ \sharp}$.

\begin{prop}
Let $(\rk\times M,H)$ be a
$k$-cosymplectic Hamiltonian system.
If \,$Y\in{\mathfrak{X}}(\rk\times M)$ is an
infinitesimal symmetry, then for every integrable $k$-vector field
${\bf X}=(X_1,\dots,X_k)\in\vf^k_H(\rk\times M)$
we have that $[Y,{\bf X}]=([Y,X_1],\dots,[Y,X_k])\in
\ker\,\omega^{ \sharp}\cap \ker\,\eta^{ \sharp}  $.
\label{pro5}
\end{prop}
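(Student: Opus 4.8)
The plan is to realise $[Y,\mathbf{X}]$ as the $s$-derivative at $0$ of the family $(\Phi_{-s})_*\mathbf{X}$, where $\{\Phi_s\}$ is the flow of $Y$, and then to exploit the fact -- already recorded in the remark after Proposition \ref{pro4} -- that each member of this family differs from $\mathbf{X}$ by a section of $\ker\omega^\sharp\cap\ker\eta^\sharp$. Since this subbundle has linear fibres, a velocity vector tangent to a curve confined to one fibre must stay in that fibre.

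First I would let $\{\Phi_s\}$ be the local flow of the infinitesimal symmetry $Y$, so that by Definition \ref{symH} each $\Phi_{-s}$ is a local symmetry. Because $\mathbf{X}=(X_1,\dots,X_k)$ is integrable and lies in $\vf^k_H(\rk\times M)$, I can apply the remark following Proposition \ref{pro4} with $\Phi=\Phi_{-s}$ to obtain, for all sufficiently small $s$,
$$(\Phi_{-s})_*\mathbf{X}-\mathbf{X}\in\ker\omega^\sharp\cap\ker\eta^\sharp .$$
Evaluating at a fixed point $x\in\rk\times M$, this says that the smooth curve $s\mapsto\big((\Phi_{-s})_*\mathbf{X}\big)(x)-\mathbf{X}(x)$ takes values in the single linear subspace $(\ker\omega^\sharp\cap\ker\eta^\sharp)_x$ of $(T^1_k(\rk\times M))_x=(T_x(\rk\times M))^k$.

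Next I would invoke the standard expression of the bracket as a Lie derivative through the flow, namely $[Y,X_A]=\Lie(Y)X_A=\frac{d}{ds}\big\vert_{s=0}(\Phi_{-s})_*X_A$ for each $A$, so that componentwise $[Y,\mathbf{X}](x)=\frac{d}{ds}\big\vert_{s=0}\big((\Phi_{-s})_*\mathbf{X}\big)(x)$. As the constant term $-\mathbf{X}(x)$ does not affect the derivative, $[Y,\mathbf{X}](x)$ is precisely the velocity at $s=0$ of the curve above; being tangent to a curve confined to the linear subspace $(\ker\omega^\sharp\cap\ker\eta^\sharp)_x$, it lies in that subspace. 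Since $x$ is arbitrary, this yields $[Y,\mathbf{X}]\in\ker\omega^\sharp\cap\ker\eta^\sharp$, as claimed.

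The only genuinely delicate step is the passage from the inclusion for each $s$ to the inclusion of the derivative: I must ensure that the pointwise inclusion $(\Phi_{-s})_*\mathbf{X}-\mathbf{X}\in\ker\omega^\sharp\cap\ker\eta^\sharp$ holds on a common $s$-interval and depends smoothly on $s$, so that the elementary fact ``the tangent to a smooth curve valued in a fixed vector subspace remains in that subspace'' can be applied fibrewise. This smoothness and the uniform validity in $s$ are guaranteed because $\Phi_s$ is a genuine flow and the remark after Proposition \ref{pro4} holds for every symmetry $\Phi_{-s}$.
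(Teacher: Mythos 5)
Your proposal is correct and follows essentially the same route as the paper: both take the flow of $Y$, use the consequence of Proposition \ref{pro4} that $F_{s*}{\bf X}-{\bf X}$ is a section of $\ker\omega^{\sharp}\cap\ker\eta^{\sharp}$, and then pass to the $s$-derivative at $0$ to identify $[Y,{\bf X}]$ as a limit that stays in this subbundle. The only (cosmetic) difference is how that last step is justified: the paper expands $F_{s*}{\bf X}-{\bf X}=g_\alpha{\bf Z}^\alpha$ in a local frame of the subbundle and takes $\lim_{s\to 0}g_\alpha/s$, while you argue fibrewise that the velocity of a smooth curve confined to a fixed linear subspace of $(T_x(\rk\times M))^k$ remains in that subspace — the same idea, stated slightly more cleanly.
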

\proof\
Denote by $F_t$ the local $1$-parameter groups of diffeomorphisms
generated by $Y$. As $Y$ is an infinitesimal symmetry,
as a consequence of Proposition \ref{pro4} we have
$F_{t*}{\bf X}-{\bf X}={\bf Z}\in
\ker\,\omega^{ \sharp}\cap \ker\,\eta^{ \sharp}  $.
 Then, taking a local basis of sections
$\{{\bf Z}^1,\ldots,{\bf Z}^r\}=
\{(Z^1_1,\ldots,Z^1_k),\ldots,(Z^r_1,\ldots,Z^r_k)\}$ of the
vector bundle $\ker\,\omega^{ \sharp}\cap \ker\,\eta^{ \sharp} \to
\rk \times M$, we have that $F_{t*}{\bf X}-{\bf X}=g_\alpha {\bf
Z}^\alpha$, $\alpha=1,\ldots,r$, with
$g_\alpha\colon\Real\times(\rk\times M)\to\Real$ (they are
functions that depend on $t$, some of them different from $0$);
that is
$$
F_{t*}{\bf X}-{\bf X}=(F_{t*}X_1-X_1,\ldots,F_{t*}X_k-X_k)=
(g_\alpha Z_1^\alpha,\ldots,g_\alpha Z_k^\alpha)=
g_\alpha {\bf Z}^\alpha \ .
$$
 Therefore
\beann
[Y,{\bf X}]&=&\Lie(Y){\bf X}=(\Lie(Y)X_1,\ldots\Lie(Y)X_k)=
\left( \lim_{t\to 0}\frac{F_{t*}X_1-X_1}{t},\ldots,
 \lim_{t\to 0}\frac{F_{t*}X_k-X_k}{t} \right) \\
&=&
\left( \lim_{t\to 0}\frac{g_\alpha}{t}Z_1^\alpha,\ldots,
 \lim_{t\to 0}\frac{g_\alpha}{t}Z_k^\alpha \right)=
(f_\alpha Z_1^\alpha,\ldots,f_\alpha Z_k^\alpha)=
f_\alpha {\bf Z}^\alpha\in\ker\,\omega^{ \sharp}\cap \ker\,\eta^{ \sharp}   \ ,
\eeann
where $f_\alpha\colon \rk\times M\to\Real$.
\qed

\subsection{$k$-cosymplectic Noether symmetries. Noether's theorem}
\label{kcnsNt}

As it is well known, the existence of symmetries
is associated with the existence of conservation laws.
How to obtain these conservation laws depends
on the symmetries that we are considering.
In particular, for Hamiltonian and Lagrangian systems,
Noether`s theorem gives a rule for doing it,
for certain kinds of symmetries:
those that preserve both the physical information
(given by the Hamiltonian or the Lagrangian function),
and some geometric structures of the system.
For $k$-cosymplectic Hamiltonian field theories
a reasonable choice consists in taking
those symmetries preserving the $k$-cosymplectic
structure as well as the Hamiltonian function.
Bearing this in mind, first we prove the following:

\begin{prop}
Let $(\rk\times M,H)$ be a $k$-cosymplectic Hamiltonian system.
\begin{enumerate}
 \item
If   $\Phi\colon \rk\times M\longto \rk\times M$ is a diffeomorphism satisfying that

(a)\quad $\Phi^*\omega^A=\omega^A$, \quad
(b)\quad $\Phi^*\eta^A=\eta^A$, \quad
(c)\quad $\Phi^*H=H$,

 then $\Phi$ is a symmetry of the $k$-cosymplectic
 Hamiltonian system $(\rk\times M,H)$.
\item
If $Y\in\vf(\rk\times M)$ a vector field satisfying that

(a)\quad $\Lie(Y)\omega^A=0$, \quad
(b)\quad $\Lie(Y)\eta^A=0$, \quad
(c)\quad $\Lie(Y)H=0$,

then $Y$ is an infinitesimal symmetry of the $k$-cosymplectic Hamiltonian system  $(\rk\times M,H)$.
\end{enumerate}
\label{previa}
\end{prop}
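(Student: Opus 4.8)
The plan is to prove part (1) directly from the characterization of symmetries given in Proposition \ref{pro4}(2), and then deduce part (2) as an immediate consequence by applying part (1) to the local flows of $Y$. The key observation is that conditions (a), (b), (c) are exactly what is needed to show that $\Phi$ preserves the solution set $\vf^k_H(\rk\times M)$ of the geometric Hamiltonian equations (\ref{geonah}).

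**Proof of part (1).** First I would check that $\Phi$ is a fiber-preserving map for $\pi_{\rk}$, which is required by Definition \ref{symH}. Since $\Phi^*\eta^A=\eta^A$ and $\eta^A=\pi_{\rk}^*\d t^A$, the pullback condition forces $\Phi$ to send fibers of $\pi_{\rk}$ to fibers, so it induces a diffeomorphism $\phi\colon\rk\to\rk$ with $\pi_{\rk}\circ\Phi=\phi\circ\pi_{\rk}$; one should verify moreover that this $\phi$ acts by $\phi^*\d t^A=\d t^A$, i.e.\ it is (up to constants) a translation preserving each coordinate, which is consistent with condition (b). Next, and this is the main computational core, I would take an integrable ${\bf X}=(X_1,\dots,X_k)\in\vf^k_H(\rk\times M)$ and show that $\Phi_*{\bf X}\in\vf^k_H(\rk\times M)$. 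For the first equation in (\ref{geonah}), one computes $\eta^A(\Phi_*X_B)=(\Phi^*\eta^A)(X_B)=\eta^A(X_B)=\delta^A_B$, using condition (b). For the second equation, the computation rests on the naturality of pullback with respect to interior products and the derivation property of the Lie derivative; using $\Phi^*\omega^A=\omega^A$ one obtains
\[
\sum_{A=1}^k\inn(\Phi_*X_A)\omega^A
=\sum_{A=1}^k\Phi_*\big(\inn(X_A)(\Phi^*\omega^A)\big)
=\Phi_*\Big(\sum_{A=1}^k\inn(X_A)\omega^A\Big)
=\Phi_*\Big(\d H-\sum_{A=1}^k R_A(H)\eta^A\Big).
\]
Finally I would show the right-hand side equals $\d H-\sum_A R_A(H)\eta^A$ again, which is where conditions (a), (b), (c) combine: $\Phi_*\d H=\d(H\circ\Phi^{-1})=\d H$ from (c), and since (a) and (b) force $\Phi$ to preserve the Reeb vector fields (as $R_A$ is characterized by $\inn(R_A)\eta^B=\delta^B_A$, $\inn(R_A)\omega^B=0$), one gets $\Phi_*\big(R_A(H)\eta^A\big)=R_A(H)\eta^A$ as well. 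Hence $\Phi_*{\bf X}$ solves (\ref{geonah}), and by Proposition \ref{pro4}(2), $\Phi$ is a symmetry.

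**Proof of part (2) and the main obstacle.** For the infinitesimal statement I would let $F_s$ denote the local flow of $Y$ and differentiate: conditions $\Lie(Y)\omega^A=0$, $\Lie(Y)\eta^A=0$, $\Lie(Y)H=0$ are exactly the infinitesimal versions of $F_s^*\omega^A=\omega^A$, $F_s^*\eta^A=\eta^A$, $F_s^*H=H$ (integrating the Lie derivative along the flow). Applying part (1) to each $F_s$ shows every $F_s$ is a local symmetry, so $Y$ is an infinitesimal symmetry by Definition \ref{symH}(2). I expect the main obstacle to be the careful verification that $\Phi$ preserves the Reeb vector fields $R_A$ and hence the term $\sum_A R_A(H)\eta^A$ on the right-hand side of (\ref{geonah}); this is the one place where the bundle structure matters and where one must use conditions (a) and (b) jointly rather than treating the forms separately. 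The interior-product manipulations are routine once one is careful that $\Phi_*\big(\inn(X)\alpha\big)=\inn(\Phi_*X)\big((\Phi^{-1})^*\alpha\big)$, so the real content is tracking how $\Phi$ interacts with the splitting that defines the Reeb fields.
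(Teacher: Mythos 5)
Your proposal is correct and follows essentially the same route as the paper: deduce fiber-preservation (and that $\phi$ is a translation) from $\Phi^*\eta^A=\eta^A$, show that (a) and (b) together force $\Phi_*R_A=R_A$ via the characterization of the Reeb fields, push the equations (\ref{geonah}) through $\Phi$ using (a), (b), (c), and conclude with Proposition \ref{pro4}(2); part (2) then follows by integrating the Lie-derivative conditions along the flow. The only cosmetic difference is that you phrase the main computation with pushforwards $\Phi_*$ while the paper pulls the whole equation back by $\Phi^*$ and invokes that $\Phi$ is a diffeomorphism, which is equivalent.
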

\begin{proof}
\begin{enumerate}
 \item
First, from
$$
\d t^A=\eta^A=\Phi^*\eta^A=\Phi^*\d t^A=\d\Phi^*t^A
$$
we conclude that $\Phi^*t^A=t^A+k^A$ ($k^A\in\Real$). 
This result (together with the condition $\Phi^*\omega^A=\omega^A$)
means that the local expression of $\Phi$ is
$\Phi(t^A,q^i,p_i)=(t^A+k^A,\Phi^i(q,p),\Phi^A_i(q,p))$.
Therefore it induces a diffeomorphism $\phi\colon\rk\to\rk$
given by $\phi(t^A)=t^A+k^A$; hence $\pi_{\rk}\circ\Phi=\phi\circ\pi_{\rk}$
and $\Phi$ is a fiber preserving map for the trivial bundle
$\pi_{\rk}\colon\rk\times M\to\rk$.

Now, as
 $\ds \eta^A(R_B)=\d t^A\left(\derpar{}{t^B}\right)=\delta^A_B$,
and $\Phi^*\eta^A=\Phi^*\d t^A=\d t^A=\eta^A$, we have
\[
\delta^A_B=\d t^A\left(\derpar{}{t^B}\right)=
(\Phi^*\d t^A)\left(\derpar{}{t^B}\right)=
\Phi^*\left\{\d t^A\left(\Phi_*\left(\derpar{}{t^B}\right)\right)\right\} \ ,
\]
thus
\[
 \Phi_*\left(\derpar{}{t^B}\right)=
\derpar{}{t^B} +\alpha^i \derpar{}{q^i}+\beta^A_i \derpar{}{p^A_i}\  ,
\]
but, since $\Phi^*\omega^A = \omega^A$, for all $A$,
\[
0 = i\left(\displaystyle \frac{\partial}{\partial t^B}\right)\omega^A =
i\left(\Phi_*\left(\frac{\partial}{\partial t^B}\right)\right)(\omega^A \circ \Phi)
\]
and then
\[
\displaystyle \frac{\partial}{\partial t^B} + \alpha^i
\frac{\partial}{\partial q^i} + \beta_{i}^A
\frac{\partial}{\partial p^A_i} = \Phi_*\left(\frac{\partial}{\partial
t^B}\right) \in \cap_{A=1}^k \ker(\omega^A \circ \Phi) =
\left\langle \frac{\partial}{\partial t^A} \circ \Phi\right\rangle_{A=1,\ldots , k}
\]
which implies that $\displaystyle \Phi_*\left(\frac{\partial}{\partial
t^B}\right) = \frac{\partial}{\partial t^B}$ that is, $\Phi_*(R_B) =
R_B$.

Furthermore, for every $k$-vector field
${\bf X}=(X_1,\ldots, X_k)\in \vf^k_H(\rk\times M)$,
we obtain that
$$
\Phi^*(\eta^A (\Phi_*X_B))= (\Phi^*\eta^A)(X_B)=\eta^A(X_B)=\delta^A_B  \ ,
$$
$$
\begin{array}{l}
\Phi^*\left[\ds\sum_{A=1}^k\inn(\Phi_*X_A)\omega^A-\d
H+(\Lie(R_A)H)\eta^A\right]=
 \\ \noalign{\medskip}
\sum_{A=1}^k\left[\inn(X_A)(\Phi^*\omega^A)-\Phi^*\d H+(\Phi^*\Lie(R_A) H) (\Phi^*\eta^A) \right]=
\\ \noalign{\medskip}
\sum_{A=1}^k[\inn(X_A)\omega^A-\d H+(\Lie(R_A)H) \eta^A]=0\ .
 \end{array}
 $$
Hence, as $\Phi$ is a diffeomorphism, these results are equivalent
to demanding that
$$
\eta^A(\Phi_*X_B)=\delta^A_B \ ; \
\sum_{A=1}^k[\inn(\Phi_*X_A)\omega^A-\d H+\Lie(R_A) H\eta^A]=0 \ .
$$
 Thus
$\Phi_*{\bf X}=(\Phi_*X_1,\dots,\Phi_*X_k)\in\vf^k_H(\rk\times M)$.
Finally, if ${\bf X}$ is integrable, then
$\Phi_*{\bf X}$ is integrable too (as Proposition \ref{pro4} claims),
and thus $\Phi$ is a symmetry.
\item
It is a consequence of the above item, taking the local flows of $Y$.
\end{enumerate}
\end{proof}

Although the condition $2(b)$ of the hypothesis is sufficient to prove that
 these kinds of vector fields are infinitesimal symmetries,
 in order to achieve a good generalization of Noether's theorem,
this condition must be hardered by demanding that
$\inn(Y)\eta^A=0$ (observe that $\inn(Y)\d t^A=0$
$\Longrightarrow$ $\Lie(Y)\d t^A=0$).
This is equivalent to write $\Lie(Y)t^A=0$ and hence,
the equivalent global condition $1(b)$ for this case
is $\Phi^*t^A=t^A$.This means that the induced diffeormorphism
 $\phi\colon\rk\to \rk$ is the identity on $\rk$.

 Taking into account all of this, we introduce the following definitions:

\begin{definition}
\label{ksns}
 Let $(\rk\times M,H)$ be a $k$-cosymplectic  Hamiltonian system.
 \begin{enumerate}
 \item
 A {\rm $k$-cosymplectic Noether symmetry} is a diffeomorphism
$\Phi\colon \rk\times M\longto \rk\times M$
satisfying the following conditions:

(a)\quad $\Phi^*\omega^A=\omega^A$, \quad
(b)\quad $\Phi^*t^A=t^A$, \quad
(c)\quad $\Phi^*H=H$.

If the $k$-symplectic structure is exact,
a $k$-cosymplectic Noether symmetry
is said to be {\rm exact} if $\Phi^*\theta^A=\theta^A$.

In the particular case that $M=(T^1_k)^*Q$ (the standard model),
if $\Phi=j^{1*}\varphi$ for some
diffeomorphism $\varphi\colon \rkq\longto \rkq$,
then the $k$-cosymplectic Noether symmetry
 $\Phi$ is said to be {\rm natural}.
\item
Let $(\rk\times M,H)$ be a $k$-cosymplectic  Hamiltonian system.
An {\rm infinitesimal $k$-cosymplectic Noether symmetry}
is a vector field $Y\in\vf(\rk\times M)$ whose local flows are local $k$-cosymplectic Noether symmetries;
that is, it satisfies that:

(a)\quad $\Lie(Y)\omega^A=0$, \quad
(b)\quad $\inn(Y)\eta^A=0$, \quad
(c)\quad $\Lie(Y)H=0$.

If the $k$-symplectic structure is exact,
an infinitesimal $k$-cosymplectic Noether symmetry
is said to be {\rm exact} if $\Lie(Y)\theta^A=0$.

In the particular case that $M=(T^1_k)^*Q$,
if $Y=Z^{1*}$ for some $Z\in\vf (\rkq)$, then the
infinitesimal $k$-cosymplectic Noether symmetry $Y$
is said to be {\rm natural}.
\end{enumerate}
\label{CNsym}

(Obviously natural (infinitesimal) $k$-cosymplectic Noether symmetries are exact).
\end{definition}

\begin{lem}
\label{lematec}
If $Y\in\vf(\rk\times M)$
is an infinitesimal $k$-cosymplectic Noether symmetry,
then $[Y,R_A]=~0$.
\end{lem}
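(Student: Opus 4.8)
The plan is to avoid computing $[Y,R_A]$ directly in Darboux coordinates (which would force me to unpack all three defining conditions by hand) and instead to exploit the fact that the Reeb vector fields are \emph{uniquely} determined by the non-degeneracy of the $k$-cosymplectic structure. Recall that the $R_A$ are characterized by $\inn(R_A)\eta^B=\delta^B_A$ and $\inn(R_A)\omega^B=0$, and that the polycosymplectic condition reads $\bigl(\cap_{B=1}^k\ker\omega^B\bigr)\cap\bigl(\cap_{B=1}^k\ker\eta^B\bigr)=\{0\}$. Hence it suffices to show that $[Y,R_A]$ annihilates all the $\omega^B$ (under interior product) and all the $\eta^B$; it will then lie in this common kernel and must therefore vanish.

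First I would upgrade the hypothesis. Condition $(b)$ only asserts $\inn(Y)\eta^A=0$, but since $\eta^A=\d t^A$ is closed, Cartan's formula gives
\[
\Lie(Y)\eta^A=\inn(Y)\d\eta^A+\d(\inn(Y)\eta^A)=0 ,
\]
which is precisely the implication already noted in the remark preceding the lemma. The key computational tool is then the commutator identity $[\Lie(Y),\inn(Z)]=\inn([Y,Z])$, that is $\inn([Y,Z])\alpha=\Lie(Y)(\inn(Z)\alpha)-\inn(Z)(\Lie(Y)\alpha)$. Applying it with $Z=R_A$ and $\alpha=\omega^B$, the first term is $\Lie(Y)(\inn(R_A)\omega^B)=\Lie(Y)(0)=0$ by the Reeb characterization, while the second is $\inn(R_A)(\Lie(Y)\omega^B)=\inn(R_A)(0)=0$ by condition $(a)$; hence $\inn([Y,R_A])\omega^B=0$ for every $B$. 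Applying it with $\alpha=\eta^B$, the first term is $\Lie(Y)(\inn(R_A)\eta^B)=\Lie(Y)(\delta^B_A)=0$ since $\delta^B_A$ is constant, and the second is $\inn(R_A)(\Lie(Y)\eta^B)=0$ by the upgraded condition $(b)$; hence $\inn([Y,R_A])\eta^B=0$ for every $B$. Consequently $[Y,R_A]\in\bigl(\cap_{B=1}^k\ker\omega^B\bigr)\cap\bigl(\cap_{B=1}^k\ker\eta^B\bigr)=\{0\}$, giving $[Y,R_A]=0$.

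There is no substantial obstacle here: the argument reduces to a two-line application of the bracket–interior-product commutation rule, and its whole force comes from invoking each of the three Noether conditions in exactly the right slot. The only point that genuinely deserves care — and the one I would flag explicitly — is the passage from $\inn(Y)\eta^A=0$ to $\Lie(Y)\eta^A=0$, which relies on the closedness of $\eta^A$; it is this closedness (automatic here, as $\eta^A=\d t^A$) that lets condition $(b)$ enter the $\eta^B$ computation. Everything else is the uniqueness of the Reeb fields, guaranteed by the non-degeneracy built into the definition of a polycosymplectic structure.
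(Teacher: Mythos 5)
Your proposal is correct and follows essentially the same route as the paper: both apply the identity $\inn([Y,R_A])\alpha=\Lie(Y)\inn(R_A)\alpha-\inn(R_A)\Lie(Y)\alpha$ to each $\omega^B$ and $\eta^B$, then conclude from $[Y,R_A]\in(\cap_B\ker\omega^B)\cap(\cap_B\ker\eta^B)=\{0\}$. Your explicit remark that $\Lie(Y)\eta^A=0$ follows from $\inn(Y)\eta^A=0$ via closedness of $\eta^A$ is a point the paper leaves implicit, but it is not a different argument.
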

\begin{proof}
In fact, for all $A,B$, we have that
\beann
\inn([Y,R_A])\omega^B =
\Lie(Y)\inn(R_A)\omega^B-
\inn(R_A)\Lie(Y)\omega^B=0 & \Longrightarrow &
[Y,R_A]\in\ker\,\omega^B\ ,
\\
\inn([Y,R_A])\eta^B =
\Lie(Y)\inn(R_A)\eta^B-\inn(R_A)\Lie(Y)\eta^B= \Lie(Y)\delta_A^B=0
& \Longrightarrow & [Y,R_A]\in\ker\,\eta^B\ , \eeann and then
$[Y,R_A]\in(\cap_B\ker\,\omega^B)\cap(\cap_B\ker\,\eta^B)=\{ 0\}$.
\end{proof}

{\bf Remarks}:
\begin{itemize}
 \item
The condition $\Phi^*t^A=t^A$ means that
$k$-cosymplectic Noether symmetries generate transformations along
the fibres of the projection
 $\pi_{\rk}\colon \rk\times M\longrightarrow\rk$; that
is, they leave the fibres of the projection
 $\pi_{\rk}\colon \rk\times M\longrightarrow\rk$ invariant or,
 what means the same thing,
$\pi_{\rk}\circ \Phi=\pi_{\rk}$.

As a consequence,  in the particular case that $M=(T^1_k)^*Q$,
if $\Phi=j^{1*}\varphi$ (for some diffeomorphism $\varphi\colon
\rkq\longto \rkq$) is a natural $k$-cosymplectic Noether
symmetry,
 then the diffeomorphism $\varphi\colon \rkq\longto \rkq$
 must leave the fibres of the projection
$p_{\rk}\colon\rkq\longrightarrow\rk$ invariant necessarily;
that is, $p_{\rk}\circ\varphi=p_{\rk}$.
\item
In the case of infinitesimal $k$-cosymplectic Noether symmetries the
analogous condition is $\inn(Y)\d t^A=0$, which means that $Y$ has
the local expression
 $\ds Y=Y_i\ds\frac{\partial}{\partial q^i}+
Y^A_i\frac{\partial}{\partial p^A_i}$.
This means that $Y$ is tangent to the fibres of the projection
 $\pi_{\rk}\colon \rk\times M\longrightarrow\rk$. Thus
these infinitesimal symmetries only generate transformations along
these fibres, or, what means the same thing, the local flows of the
generators $Y$ leave the fibres of the projection
 $\pi_{\rk}\colon\rk\times M\longrightarrow\rk$ invariant.
 Furthermore, as a consequence of the above Lemma,
 and taking into account that $\ds R_A=\derpar{}{t^A}$,
in this local expression for $Y$
the component functions $Y_i,Y^A_i$ do not depend on the coordinates $(t^A)$.

Observe also that,  in the particular case that $M=(T^1_k)^*Q$,
if $Y=Z^{1*}$ (for some $Z\in\vf (\rkq)$) is a
natural infinitesimal $k$-cosymplectic Noether symmetry, then
$\inn(Y)\d t^A=0$, necessarily.
\end{itemize}

In addition, it is immediate to prove that, if $Y_1,Y_2\in\vf(\rk\times M)$
 are infinitesimal Noether symmetries, then so is $[Y_1,Y_2]$.

It is interesting to comment that, for infinitesimal
$k$-cosymplectic Noether symmetries, the results in the item 2 of
Proposition \ref{pro4} and in Proposition  \ref{pro5} hold, not
only for integrable $k$-vector fields in $\vf^k_H(\rk\times M)$,
but for every $k$-vector field ${\bf X}\in\vf^k_H(\rk\times M)$.
In fact, for the first one we have \beann
\sum_{A=1}^k\inn([Y,X_A])\omega^A &=&
\sum_{A=1}^k\{\Lie(Y)\inn(X_A)\omega^A-\inn(X_A)\Lie(Y)\omega^A\}=
\sum_{A=1}^k\Lie(Y)(\d H-(\Lie(R_A)H)\eta^A)
\\ &=&
\sum_{A=1}^k\{\d(\Lie(Y)(H))-(\Lie(Y)\Lie(R_A)H)\eta^A-(\Lie(R_A)H)\Lie(Y)\eta^A\}
\\ &=&
 -\sum_{A=1}^k(\Lie(R_A)\Lie(Y)H)\eta^A=0 \ .
\eeann
Furthermore
$$\inn([Y,X_A])\eta^B=\Lie(Y)\inn(X_A)\eta^B-\inn(X_A)\Lie(Y)\eta^B=0\, ,$$
and the proof for the second one is straighforward.

As infinitesimal $k$-cosymplectic Noether symmetries
are vector fields in $\rk\times M$
whose local flows are local $k$-cosymplectic Noether symmetries,
all the results that we state
for $k$-cosymplectic Noether symmetries also hold for
infinitesimal $k$-cosymplectic Noether symmetries.
Hence, from now on we consider only the infinitesimal case.

A first relevant result is the following:

\begin{prop}
Let $Y\in\vf(\rk\times M)$ be an
infinitesimal $k$-cosymplectic Noether symmetry.
Then, for every $p\in\rk\times M$, there is an open
neighbourhood $U_p\ni p$, such that:
\begin{enumerate}
\item
There exist ${\cal F}^A\in\Cinfty(U_p)$,
which are unique up to constant functions, such that
\begin{equation} \inn(Y)\omega^A=\d {\cal
F}^A, \qquad \mbox{\rm (on $U_p$)}\;.
 \label{funo}
 \end{equation}
\item
There exist $\zeta^A\in\Cinfty(U_p)$, verifying that
$\Lie(Y)\theta^A=\d\zeta^A$, on $U_p$; and then
$$
{\cal F}^A=\inn(Y)\theta^A-\zeta^A, \qquad \mbox{\rm (up to a
constant function, on $U_p$)}\,.
$$
 \label{fdos}
\end{enumerate}
 \label{structure}
\end{prop}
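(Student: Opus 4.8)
The plan is to work in a single well-chosen neighbourhood and reduce everything to Cartan's magic formula together with the Poincar\'e lemma. First I would fix $p\in\rk\times M$ and choose $U_p$ to be a connected, contractible Darboux chart around $p$. This guarantees simultaneously that the canonical local $1$-forms $\theta^A=p^A_i\,\d q^i$ (with $\omega^A=-\d\theta^A$) are available on $U_p$, that every closed form on $U_p$ is exact, and that every locally constant function on $U_p$ is genuinely constant (these three properties are exactly what the three assertions of the statement require).

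For part (1), since $Y$ is an infinitesimal $k$-cosymplectic Noether symmetry we have $\Lie(Y)\omega^A=0$, and because the forms of a $k$-cosymplectic structure are closed, $\d\omega^A=0$. Cartan's formula $\Lie(Y)\omega^A=\inn(Y)\d\omega^A+\d\inn(Y)\omega^A$ then collapses to $\d\inn(Y)\omega^A=0$, so $\inn(Y)\omega^A$ is a closed $1$-form on $U_p$. The Poincar\'e lemma supplies ${\cal F}^A\in\Cinfty(U_p)$ with $\inn(Y)\omega^A=\d{\cal F}^A$, and uniqueness up to an additive constant follows because any two primitives differ by a closed, hence (on the connected $U_p$) constant, function.

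For part (2), I would apply Cartan's formula a second time, now to $\theta^A$, obtaining $\Lie(Y)\theta^A=\inn(Y)\d\theta^A+\d\inn(Y)\theta^A=-\inn(Y)\omega^A+\d\bigl(\inn(Y)\theta^A\bigr)$, where I have used $\d\theta^A=-\omega^A$. Substituting the outcome of part (1), namely $\inn(Y)\omega^A=\d{\cal F}^A$, gives $\Lie(Y)\theta^A=\d\bigl(\inn(Y)\theta^A-{\cal F}^A\bigr)$. Hence $\zeta^A:=\inn(Y)\theta^A-{\cal F}^A$ is the required primitive with $\Lie(Y)\theta^A=\d\zeta^A$, and rearranging this equality yields precisely ${\cal F}^A=\inn(Y)\theta^A-\zeta^A$. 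The ``up to a constant function'' clause simply records the freedom in the choice of ${\cal F}^A$ (equivalently of $\zeta^A$) established above.

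There is no serious obstacle here: the argument is essentially a double application of Cartan's identity, each time exploiting closedness of $\omega^A$ (for part (1)) and the relation $\omega^A=-\d\theta^A$ (for part (2)). The only points demanding care are bookkeeping ones — choosing $U_p$ so that both the Poincar\'e lemma and the local availability of the $\theta^A$ hold at once, and keeping the sign convention $\omega^A=-\d\theta^A$ consistent so that the stated relation ${\cal F}^A=\inn(Y)\theta^A-\zeta^A$, rather than its sign-flipped variant, comes out correctly.
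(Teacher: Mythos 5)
Your proof is correct and follows essentially the same route as the paper: Cartan's formula plus closedness of $\omega^A$ gives part (1) via the Poincar\'e lemma, and a second application of Cartan's formula with $\omega^A=-\d\theta^A$ gives part (2), the only cosmetic difference being that you define $\zeta^A:=\inn(Y)\theta^A-{\cal F}^A$ directly while the paper first obtains $\zeta^A$ abstractly from the closed form $\Lie(Y)\theta^A$ and then identifies it.
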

\begin{proof}
\begin{enumerate}
\item
It is a consequence of the Poincar\'e Lemma and the condition
$$
0=\Lie(Y)\omega^A=\inn(Y)\d\omega^A+\d\inn(Y)\omega^A=\d\inn(Y)\omega^A\;.
$$
\item
We have that
$$
\d\Lie(Y)\theta^A=\Lie(Y)\d\theta^A=-\Lie(Y)\omega^A=0
$$
and hence $\Lie(Y)\theta^A$ are closed forms. Therefore, by the
Poincar\'e Lemma, there exist $\zeta^A\in\Cinfty(U_p)$, verifying
that $\Lie(Y)\theta^A=\d\zeta^A$, on $U_p$. Furthermore, as
(\ref{funo}) holds on $U_p$, we obtain that
$$
\d\zeta^A=\Lie(Y)\theta^A= \d\inn(Y)\theta^A+\inn(Y)\d\theta^A=
\d\inn(Y)\theta^A-\inn(Y)\omega^A=\d \{\inn(Y)\theta^A-{\cal F}^A\}
$$
and thus \ref{fdos} holds.
\end{enumerate}
\end{proof}

{\bf Remark}:
For exact infinitesimal $k$-cosymplectic Noether
symmetries we have that
${\cal F}^A=\inn(Y)\theta^A$ (up to a constant function).

Finally, the classical Noether's theorem can be stated
for these kinds of symmetries as follows:

\begin{theorem}
\label{Nthsec} {\rm (Noether's theorem):} If $Y\in\vf(\rk\times
M)$ is an infinitesimal $k$-cosymplectic Noether symmetry then,
for every $p\in \rk \times M$, there is an open neighborhood
$U_p\ni p$ such that the functions ${\cal
F}^A=\inn(Y)\theta^A-\zeta^A$, define a conservation law  ${\cal
F}=({\cal F}^1,\ldots,{\cal F}^k)$.
\end{theorem}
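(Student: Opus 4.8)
The plan is to verify the criterion (\ref{cons-law field}): the map ${\cal F}=({\cal F}^1,\ldots,{\cal F}^k)$ is a conservation law if and only if $\sum_{A=1}^k \Lie(X_A){\cal F}^A=0$ for every integrable ${\bf X}=(X_1,\ldots,X_k)\in\vf^k_H(\rk\times M)$. Since the functions ${\cal F}^A$ furnished by Proposition \ref{structure} are defined only on a neighbourhood $U_p$, this is precisely the local statement to be established.

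First I would use that for a function $\Lie(X_A){\cal F}^A=\inn(X_A)\d{\cal F}^A$, and then invoke part 1 of Proposition \ref{structure}, i.e. equation (\ref{funo}), which gives $\d{\cal F}^A=\inn(Y)\omega^A$ on $U_p$. Hence, using the antisymmetry of each $\omega^A$,
\[
\sum_{A=1}^k\Lie(X_A){\cal F}^A=\sum_{A=1}^k\inn(X_A)\inn(Y)\omega^A=\sum_{A=1}^k\omega^A(Y,X_A)=-\sum_{A=1}^k(\inn(X_A)\omega^A)(Y).
\]
Now I would feed in the second of the equations (\ref{geonah}) satisfied by ${\bf X}$, namely $\sum_A\inn(X_A)\omega^A=\d H-\sum_A R_A(H)\eta^A$.

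Evaluating this on $Y$ gives $\sum_A(\inn(X_A)\omega^A)(Y)=Y(H)-\sum_A R_A(H)\,\inn(Y)\eta^A$. The two defining conditions of an infinitesimal $k$-cosymplectic Noether symmetry then finish the argument: condition (c), $\Lie(Y)H=0$, annihilates $Y(H)=\d H(Y)$, while condition (b), $\inn(Y)\eta^A=0$, annihilates each remaining term. Therefore $\sum_A\Lie(X_A){\cal F}^A=0$ on $U_p$, and the criterion (\ref{cons-law field}) is met.

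The computation is essentially mechanical, so I do not expect a serious obstacle. The only points requiring care are the bookkeeping of interior-product signs (distinguishing $\omega^A(Y,X_A)$ from $\omega^A(X_A,Y)$) and the observation that one should use the representation $\d{\cal F}^A=\inn(Y)\omega^A$ from part 1 of Proposition \ref{structure}, rather than the explicit formula ${\cal F}^A=\inn(Y)\theta^A-\zeta^A$ from part 2, which enters only through the statement of the theorem itself. I would also note that the argument is valid for any solution of (\ref{geonah}), integrable or not, so the integrability of ${\bf X}$ is used only to the extent that the stated criterion requires it.
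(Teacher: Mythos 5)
Your proposal is correct and follows essentially the same route as the paper's own proof: both use equation (\ref{funo}) to write $\Lie(X_A){\cal F}^A=\inn(X_A)\inn(Y)\omega^A$, sum over $A$, substitute the Hamiltonian equation (\ref{geonah}), and kill the resulting terms with $\Lie(Y)H=0$ and $\inn(Y)\eta^A=0$. Your closing remarks on the sign bookkeeping and on using part 1 rather than part 2 of Proposition \ref{structure} match the paper's computation exactly.
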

\begin{proof}
Let ${\bf X}=(X_1,\ldots, X_k)\in \vf^k_H(\rk\times M)$
an integrable $k$-vector field.
From (\ref{funo}) one obtains
\beann
\sum_{A=1}^k L(X_A) {\cal F}^A &=&
 \sum_{A=1}^k \inn (X_A) d{\cal F}^A =
\sum_{A=1}^k \inn (X_A) \inn (Y)\omega^A =
 - \inn(Y)\sum_{A=1}^k \inn (X_A)\omega^A
\\ &=&   - \inn (Y)\d H +\sum_{A=1}^k\inn(Y)((\Lie(R_A)H)\eta^A)=
 - \Lie(Y)H +\sum_{A=1}^k(\Lie(R_A)H)\inn(Y)\eta^A=0 \ ,
\eeann
 that is, ${\cal F}=({\cal F}1,\ldots,{\cal F}^k)$ is a
conservation law for the Hamilton-de Donder-Weyl equations.
\end{proof}

Observe that, using Darboux coordinates in $\rk\times M$,
 the item 2 of Proposition \ref{structure}
tells us that the conservation laws
associated with  infinitesimal $k$-cosymplectic Noether symmetries
does not depend on the coordinates $(t^A)$
(as it is obvious since the generators of these symmetries,
the vector fields $Y$, neither depend on them).

\section{Example}
\protect\label{ex}

\subsection{$k$-cosymplectic quadratic Hamiltonian systems}

 Many Hamiltonian systems in field theories are of ``quadratic'' type
 and they can be modeled as follows.

 Consider the $k$-cosymplectic manifold
$(\rk\times (T^1_k)^*Q,\eta^A,\omega^A,{\cal V})$. Let $g_1,
\ldots , g_k$ be $k$ semi-Riemannian metrics in $Q$. For every
$q\in Q$ we have the following isomorphisms:
$$
\begin{array}{cccccc}
g_A^\flat & \colon & T_qQ & \longrightarrow & T_q^*Q \\
& & v & \mapsto & \inn (v)g_A
\end{array} \quad ,
$$
with $A \in \{1, \ldots , k\}$ and then we can introduce the dual
metric of $g_A$, denoted by $g_A^*$, which is defined by
$$
g_A^*(\alpha_q,\beta_q):=g_A((g_A^\flat)^{-1}(\alpha_q),(g_A^\flat)^{-1}(\beta_q))
 \quad , \quad \mbox{\rm for every $\alpha_q,\beta_q\in T_q^*Q$ and $A \in \{1, \ldots , k\}$} \quad .
 $$
We can define a function $K\in\Cinfty(\rk\times(T^1_k)^*Q)$ as follows:
for every $(t,q;\alpha_q^1,\ldots,\alpha_q^k)\in \rk\times(T^1_k)^*Q$,
$$
K(t,q;\alpha_q^1,\ldots,\alpha_q^k):= \frac{1}{2}\sum_{A=1}^k
g_A^*(\alpha_q^A,\alpha_q^A).
$$
Then, if $V \in C^{\infty}(\rk \times Q)$ we can introduce a
Hamiltonian function $H\in\Cinfty(\rk\times (T^1_k)^*Q)$ of {\sl
quadratic} type as follows
$$
H=K+ V \circ (\pi_Q)_{1,0}^*.
$$
Using natural coordinates $(t^A, q^i, p^A_i)$ on $\rk \times
(T^1_k)^*Q$ the local expression of $H$ is
 $$
 H(t^A, q^i, p^A_i)=\frac{1}{2}\sum_{A=1}^kg_{A}^{ij}(q^k)p^A_i p_j^A+V(t^B,q^j) \ ,
 $$
  where $g_{A}^{ij}$ denote the coefficients of the matrix
 associated to $g_A^*$. Then
$$
\d H=\sum_{A=1}^k\left[\frac{\partial V}{\partial t^A} \d t^A + \left(\displaystyle
\frac{1}{2} \frac{\partial g_A^{ij}}{\partial q^k} p_i^A p^A_j +
\frac{\partial V}{\partial q^k}\right)\d q^k + (g_A^{ij} p^A_i)\d p_j^A\right]
 $$
 Moreover, if ${\bf
X}=(X_1,\ldots,X_k)\in{\mathfrak{X}}^k_H(\Real^k\times(T^1_k)^*Q)$
with
 $$
  X_A=\sum_{B=1}^k\left[(X_A)^B\frac{\partial}{\ds\partial t^B} +
(X_A)^i\frac{\partial}{\ds\partial q^i}+(X_A)^B_i\frac{\partial}{\ds\partial p^B_i}\right]
$$
the equations (\ref{geonah}) lead to
\begin{equation}
\label{ec00} (X_A)^B=\delta_A^B \quad ,\quad (X_A)^i =
g_{A}^{ij}p^A_j \ \mbox{\rm ($A$ fixed)}  \quad , \quad -\displaystyle \sum_{A=1}^k
(X_A)^A_i = \frac{1}{2} \sum_{A=1}^k \frac{\partial
g_{A}^{jk}}{\partial q^i} p_{j}^A p^A_k + \frac{\partial
V}{\partial q^i}
\end{equation}
that is, we have obtained
$$
  X_A= \left[\derpar{}{t^A} +g_{A}^{ij}p^A_j\derpar{}{ q^i}+
  (X_A)^B_i\derpar{}{p^B_i}\right]
$$
with $(X_B)^B_i = -\ds\derpar{V}{q^i} - \frac{1}{2} \frac{\partial
g_A^{jk}}{\partial q^i} p^A_j p^A_k$.

Now, if $\psi(t)=(t^A,\psi^i(t),\psi^A_i(t))$ is an integral
section of ${\bf X}$ then \beq \label{ec0}
X_A(\psi(t))=\psi_*(t)\left(\derpar{}{t^A}\Big\vert_{t}\right)=
\left[\derpar{}{t^A}+\derpar{\psi^i}{t^A} \derpar{}{q^i}+
\derpar{\psi^B_i}{t^A}\derpar{}{p^B_i}\right] \eeq Thus, from
(\ref{ec00}) and (\ref{ec0}), we obtain the Hamilton-de
Donder-Weyl equations \bea -\derpar{V}{q^i}(\psi(t)) - \frac{1}{2}
\frac{\partial g_A^{ij}}{\partial q^i} \psi^A_j \psi_k^A &=&
\sum_{A=1}^k(X_A)^A_i(\psi(t)) =
\sum_{A=1}^k\derpar{\psi^A_i}{t^A}
\nonumber \\
g_{A}^{ij}(\psi(t))\psi^A_j &=& X^A_i(\psi(t))  =
\derpar{\psi^i}{t^A} \quad \mbox{\rm ($A,i$ fixed)} \ .
\label{soluc0} \eea Then, from these equations we conclude that
$$
\psi^A_i=(g_{A})_{ij}\derpar{\psi^j}{t^A} \quad \mbox{\rm ($A,i$
fixed)}   \ ,
$$
and hence the equations for the integral sections are \beq
\displaystyle \sum_{A,j}(g_A)_{ij}
\frac{\partial^2\psi^j}{\partial (t^A)^2} = -\frac{\partial
V}{\partial q^i} - \frac{1}{2} \sum_{A,j,k,l,m} \frac{\partial
g_{A}^{jk}}{\partial q^i} (g_{A})_{kl}(g_{A})_{jm} \frac{\partial
\psi^l}{\partial t^A} \frac{\partial \psi^m}{\partial t^A}, \; \;
\mbox{ for all } i . \label{hdwquad} \eeq

We also may prove the following result
\begin{prop}
Let $X$ be a Killing vector field on $Q$ for the semi-Riemannian
metrics $g_1, \ldots , g_{k}$ (that is, $\Lie(X)g_A = 0$, for all
$A \in \{1, \ldots , k\}$) such that $X(V) = 0$. Then, the vector
field $X^{1*}$ on $\rk \times (T^1_k)^*Q$ is a natural
infinitesimal symmetry for the $k$-cosymplectic Hamiltonian system
$(\rk \times (T^1_k)^*Q, H)$. Thus, if ${\cal F} = (\hat{X},
\ldots , \hat{X}): \rk \times (T^1_k)^*Q \to \rk$ is the map
defined by
\[
{\cal F}(t, q; \alpha^1_q, \ldots , \alpha^k_q) =
(\alpha^1_q(X(q)), \ldots , \alpha^k_q(X(q))),
\]
for $(t, q; \alpha^1_q, \ldots , \alpha^k_q) \in \rk \times
(T^1_k)^*Q$, we have that ${\cal F}$ is a conservation law for the
Hamiltonian system.
\end{prop}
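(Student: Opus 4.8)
The plan is to verify that $X^{1*}$ meets the three defining conditions of an infinitesimal $k$-cosymplectic Noether symmetry in Definition \ref{CNsym}, and then to read off the conservation law from Noether's theorem (Theorem \ref{Nthsec}) together with the exactness of the symmetry. Here $X\in\vf(Q)$ is regarded as a vector field on $\rk\times Q$ with vanishing $t$-components, so its complete lift has the local form $X^{1*}=X^i\frac{\partial}{\partial q^i}-p^A_j\frac{\partial X^j}{\partial q^i}\frac{\partial}{\partial p^A_i}$ (the total derivatives reduce to ordinary ones since $X^j=X^j(q)$), and naturality holds by construction since $X^{1*}=Z^{1*}$ with $Z=X$. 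Condition (b), $\inn(X^{1*})\eta^A=0$, is then immediate because $X^{1*}$ has no $\partial/\partial t^A$ component and $\eta^A=\d t^A$.

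For condition (a) I would compute $\Lie(X^{1*})\theta^A$ rather than $\Lie(X^{1*})\omega^A$ directly. Using $\inn(X^{1*})\theta^A=p^A_iX^i$ together with $\d\theta^A=-\omega^A$ in Cartan's formula, the $\d p^A_i$ contributions cancel at once and the $\d q^i$ contributions cancel after relabelling the summation indices, giving $\Lie(X^{1*})\theta^A=0$. This yields (a) immediately, since $\Lie(X^{1*})\omega^A=-\d\,\Lie(X^{1*})\theta^A=0$, and simultaneously confirms exactness (consistent with the remark after Definition \ref{CNsym} that natural symmetries are exact).

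Condition (c), $\Lie(X^{1*})H=0$, is the technical core and the step I expect to be the main obstacle. Conceptually it is transparent: the flow of $X^{1*}$ is the cotangent lift of the isometry flow generated by the Killing field $X$, which preserves each fibrewise norm $g_A^*(\alpha^A,\alpha^A)$ and hence $K$; combined with $X^{1*}(V)=X(V)=0$ this forces $X^{1*}(H)=0$. To make this rigorous in coordinates I would split $X^{1*}(H)=X^{1*}(K)+X^{1*}(V)$. The potential term vanishes because $V=V(t,q)$ and $X^{1*}$ acts on it only through $X^i\partial/\partial q^i$, giving $X(V)=0$. For the kinetic term, the Killing condition $\Lie(X)g_A=0$ is first converted into the equivalent identity for the dual metric, $X^l\frac{\partial g_A^{ij}}{\partial q^l}=g_A^{lj}\frac{\partial X^i}{\partial q^l}+g_A^{il}\frac{\partial X^j}{\partial q^l}$, and substituting this into $X^{1*}(K)$ makes the two contributions (from the $\partial/\partial q^i$ part and the $\partial/\partial p^A_i$ part of $X^{1*}$) cancel term by term after symmetrising in the momentum indices. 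The delicate points here are the passage from the Killing equation for $g_A$ to the one for its inverse $g_A^{ij}$ and the careful bookkeeping of repeated indices.

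Finally, having established that $X^{1*}$ is an exact natural infinitesimal $k$-cosymplectic Noether symmetry, Theorem \ref{Nthsec} furnishes a conservation law with components $\mathcal{F}^A=\inn(X^{1*})\theta^A-\zeta^A$; since the symmetry is exact we may take $\zeta^A=0$, so $\mathcal{F}^A=\inn(X^{1*})\theta^A=p^A_iX^i=\alpha^A_q(X(q))$. This is precisely the map $\mathcal{F}=(\hat X,\ldots,\hat X)$ of the statement, which completes the proof.
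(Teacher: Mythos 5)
Your proposal is correct and follows essentially the same route as the paper: verify the three conditions of Definition \ref{CNsym} for $X^{1*}$ (with $\Lie(X^{1*})\theta^A=0$ giving exactness), reduce condition (c) to $X^{1*}(K)=0$ via the Killing identity for the dual metric together with $X(V)=0$, and read off $\mathcal{F}^A=\inn(X^{1*})\theta^A=\hat X$ from the remark on exact symmetries following Proposition \ref{structure}. One minor point in your favour: your dual-metric identity $X(g_A^{ij})=g_A^{ik}\,\partial X^j/\partial q^k+g_A^{jk}\,\partial X^i/\partial q^k$ carries the correct sign, whereas the paper's displayed version of that identity has a sign slip (its identity for $(g_A)_{jk}$ is correct, and the term-by-term cancellation in $X^{1*}(K)=0$ in fact requires the positive sign you wrote).
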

\begin{proof}
As we know
\[
\Lie(X^{1*})\theta^A = 0, \; \; \mbox{ for all } A.
\]
Moreover, it is clear that
\[
\inn(X^{1*})\eta^A = 0, \; \; \mbox{ for all } A.
\]
So, it is sufficient to prove that
\[
\Lie(X^{1*})H = 0.
\]
Now, using that $X^{1*}$ is $(\pi_{Q})_{1,0}^*$-projectable over
$X$ and the fact that $\Lie(X)V = 0$, we deduce that
\[
\Lie(X^{1*})(V \circ (\pi_{Q})_{1,0}^*) = 0.
\]
Next, we will prove that
\[
\Lie(X^{1*})(K) = 0.
\]
Assuming that the local expression of $X$ is
\[
X = X^i \frac{\partial}{\partial q^i} \ ,
\]
then, as $\Lie(X)g_A = 0$, we have that
\[
X((g_{A})_{jk}) = \displaystyle - \frac{\partial X^l}{\partial
q^j} (g_{A})_{kl} - \frac{\partial X^l}{\partial q^k}
(g_{A})_{jl}, \; \; \mbox{ for all } A, j \mbox{ and } k
\]
which implies that
\[
X(g_{A}^{ij}) = \displaystyle - \frac{\partial X^j}{\partial q^k}
g_{A}^{ik} - \frac{\partial X^i}{\partial q^k} g_{A}^{jk}, \; \;
\mbox{ for all } A, i \mbox{ and } j.
\]
Therefore, using that the local expressions of $X^{1*}$ and $K$
are
\[
X^{1*} = \displaystyle X^i \frac{\partial}{\partial q^i} - p^A_j
\frac{\partial X^j}{\partial q^i} \frac{\partial}{\partial p^A_i},
\; \; K = \frac{1}{2} \sum_{A,i,j}g_{A}^{ij} p^A_i p^A_j
\]
we conclude that
\[
\Lie(X^{1*})K = 0.
\]
Furthermore, if $\hat{X}: T^*Q \to \mathbb{R}$ is the linear
function on $T^*Q$ associated with the vector field $X$, it
follows that
\[
(i(X^{1*})\theta^A)(t, q; \alpha^1_q, \ldots , \alpha^k_q) =
\hat{X}(\alpha^A_q), \; \; \mbox{ for all } A.
\]
Consequently, ${\cal F} = (\hat{X}, \ldots , \hat{X})$ is a
conservation law (see the remark after Proposition
\ref{structure})
\end{proof}

\subsection{A particular case: the wave equation}

As particular examples of these kinds of systems we can detache the
following case
(see \cite{MSV-2009} for a more detailed explanation):

Consider the three-dimensional wave equation
\begin{equation}\label{ec42110}
\sigma\frac{\partial^2 \psi}{\displaystyle\partial t^2}-\tau\left(\frac{\partial^2\psi}{\displaystyle\partial x^2}+
\frac{\partial^2\psi}{\displaystyle\partial y^2}+\frac{\partial^2\psi}{\displaystyle\partial z^2}\right)=0 \ .
\end {equation}
In this case $M=\Real^4\times(T^1_2)^*Q$ (i.e., $k=4$), with
$Q=\Real$ ($n=1$), and $g_i$, $i= 1, \ldots , 4$, are the
semi-Riemannian metrics on $\mathbb{R}$
\[
g_1 = \sigma dq^2, \; \; g_2 = g_3 = g_4 = -\tau dq^2,
\]
$q$ being the standard coordinate on $\mathbb{R}$. We have done
the identifications $t^1\equiv t$ and $t^2\equiv x, t^3\equiv y,
t^4\equiv z$, where $t$ is time and $x,y,z$ denote the position in
space. Then, $\psi(t,x,y,z)$ denotes the displacement of each
point of the media where the wave is propagating, as function of
the time and the position, and $\sigma$ and $\tau$ are physical
constants.

The wave equation (\ref{ec42110}) is then a particular case of the
equation (\ref{hdwquad}) for the quadratic Hamiltonian
$$
 H=\frac{1}{2}\left[\frac{1}{\sigma}(p^1)^2-
 \frac{1}{\tau}\left((p^2)^2+(p^3)^2+(p^4)^2\right)\right]
 \in\Cinfty(\Real^2\times(T^1_2)^*\Real) \ .
$$
We have that the canonical vector field on $\mathbb{R}$,
$\displaystyle \frac{\partial}{\partial q}$, is a Killing vector
field for the semi-Riemannian metrics $g_i$, $i = 1, \ldots , 4$.
Thus,
\[
{\cal F} = (p^1, p^2, p^3, p^4): \mathbb{R}^4 \times
(T^1_4)^*\mathbb{R} \to \mathbb{R}^4
\]
is a conservation law for the three-dimensional wave equation.

Note that if
\[
\tilde{\psi}: (t,x,y,z) \to (t,x,y,z,\psi(t,x,y,z);
\psi^1(t,x,y,z),\psi^2(t,x,y,z),\psi^3(t,x,y,z),\psi^4(t,x,y,z))
\]
is a solution to the Hamilton-de Donder-Weyl equations then, from
(\ref{hdwquad}), it follows that
\[
\psi^1 = \displaystyle \sigma \frac{\partial \psi}{\partial t}, \;
\; \psi^2 = -\displaystyle \tau \frac{\partial \psi}{\partial x},
\; \; \psi^3 = -\displaystyle \tau \frac{\partial \psi}{\partial
y}, \; \; \psi^4 = -\displaystyle \tau \frac{\partial
\psi}{\partial z}.
\]
Thus, the conservation law leads to the starting field equations.
In fact,
\[
Div ({\cal F} \circ \tilde{\psi}) = \sigma\frac{\partial^2
\psi}{\displaystyle\partial
t^2}-\tau\left(\frac{\partial^2\psi}{\displaystyle\partial x^2}+
\frac{\partial^2\psi}{\displaystyle\partial
y^2}+\frac{\partial^2\psi}{\displaystyle\partial z^2}\right)=0.
\]

\section{Conclusions and outlook}

We have studied symmetries and reduction of $k$-cosymplectic
Hamiltonian systems in classical field theories; in particular,
those which are modeled on $k$-cosymplectic manifolds
${\cal M}=\rk\times M$, with $M$ a generic $k$-symplectic manifold
(which we have called almost-standard $k$-cosymplectic manifolds).

In particular we have analyzed a kind of
$k$-cosymplectic Noether symmetries
for which there is a direct way to associate conservation laws
by means of the application of the corresponding generalized
version of the Noether theorem.

As discussed in Section \ref{ntmvf},
for the almost-standard $k$-cosymplectic Hamiltonian systems,
 the symmetries that we have considered in this work
have the following geometric characteristic:
they generate transformations along the fibres of the projection
 $\rk\times M\longrightarrow\rk$.
 As a consequence, in a local description, the associated conservation laws
 do not depend on the base coordinates $(t^A)$.
 This could seem to be a strong restriction but,
 really, many symmetries of field theories in physics are of this type.
 In any case, a theory of symmetries, conservation laws and reduction concerning to
 more general kinds of symmetries would have to be developed.

\section*{Acknowledgments}
We acknowledge the partial financial support of the {\sl
Ministerio de Ciencia e Innovaci\'on} (Spain), projects
MTM2008-00689, MTM2009-13383 and MTM2009-08166-E.
We thanks to the referee for his valuable suggestions and comments
which have allow us to significantly improve some parts of this work.

\end{document}